\newcommand{\remove}[1]{}
\renewcommand{\int}{int} \newtheorem{claimx}{Claim}
\renewenvironment{proof} {{\em
    Proof:}}{\hspace*{\fill}$\Box$\par\vspace{2mm}}
\newcommand{\NP}{$\mathcal{NP}$\xspace}
\renewcommand{\NP}{$\mathcal{NP}$}
\newcommand{\NPC}{\mbox{\NP-complete}\xspace}
\newcommand{\NPCN}{\mbox{\NP-completeness}\xspace}
\newcommand{\NPHN}{\mbox{\NP-hardness}\xspace}
\definecolor{blue}{rgb}{0.274,0.392,0.666}
\definecolor{red}{rgb}{0.627,0.117,0.156}
\definecolor{green}{rgb}{0,0.588,0.509}
\newcommand{\red}[1]{{\color{red}{#1\xspace}}}
\newcommand{\blue}[1]{{\color{blue}{#1\xspace}}}
\newcommand{\clinstance}[1]{$(V^{#1},E^{#1},\gamma^{#1},T^{#1})$\xspace}
\newcommand{\clp}{{\sc Clustered-Level Planarity}\xspace}
\newcommand{\clpshort}{{\sc CL-Planarity}\xspace}
\newcommand{\tlinstance}[1]{$(V^{#1},E^{#1},\gamma^{#1},\mathcal{T}^{#1})$\xspace}
\newcommand{\tlp}{{\sc {\em $T$}-Level Planarity}\xspace}
\newcommand{\betp}{{\sc Betweenness}\xspace}
\newcommand{\Gint}{$G_{\cap}$\xspace}
\newcommand{\Gr}[1]{$\red{G^{#1}_1}$\xspace}
\newcommand{\Gb}[1]{$\blue{G^{#1}_2}$\xspace}
\newcommand{\sefeinstance}[1]{$\langle\red{G^{#1}_1},\blue{G^{#1}_2}\rangle$\xspace}
\newcommand{\sefe}{SEFE-$2$\xspace} 
\title{On the Complexity of Clustered-Level Planarity and $T$-Level Planarity
\thanks{
Research was supported in part by the Italian
  Ministry of Education, University, and Research (MIUR) under PRIN
  2012C4E3KT national research project ``AMANDA -- Algorithmics for
  MAssive and Networked DAta'' and by ESF project 10-EuroGIGA-OP-003 GraDR. Fabrizio Frati was partially supported by the Australian Research 
  Council (grant DE140100708)'.}
}
\date{}
\newcommand{\rome}{$^\dag$} \newcommand{\sidney}{$^{\diamond}$}
\author{Patrizio Angelini\rome, Giordano {Da Lozzo\rome}, Giuseppe {Di Battista\rome}, \\
    Fabrizio Frati\sidney, Vincenzo Roselli\rome \institute{
    \rome~Department of Engineering, Roma Tre University, Italy\\
    \email{\{angelini,dalozzo,gdb,roselli\}@dia.uniroma3.it}\\
    \sidney~School of Information Technologies, The University of Sydney, Australia\\
    \email{fabrizio.frati@sydney.edu.au} }}
\begin{document}

\maketitle

\begin{abstract}
  In this paper we study two problems related to the drawing of level
  graphs, that is, \tlp and \clp. We show that both problems are \NPC in the general case and that they become polynomial-time solvable when restricted to proper instances.
\end{abstract}

\section{Introduction and Overview} \label{se:introduction}
A level graph is {\em proper} if any of its edges spans just two consecutive levels. Several papers about constructing level drawings of level graphs assume that the input graph is proper. Otherwise, they suggest to make it proper by ``simply adding dummy vertices'' along the edges spanning more than two levels. In this paper we show that this apparently innocent augmentation has dramatic consequences if, instead of constructing just a level drawing, we are also interested in representing additional constraints, like clustering of vertices or consecutivity constraints on the ordering of vertices on levels.

A \emph{level graph} $G =(V,E,\gamma)$ is a graph with a function $\gamma: V \rightarrow \{1,2,...,k\}$, with $1 \leq k \leq |V|$ such that $\gamma(u) \neq \gamma(v)$ for each edge $(u,v) \in E$. The set $V_i = \{v | \gamma(v)=i\}$ is the $i$-th \emph{level} of $G$.
A level graph $G = (V,E,\gamma)$ is \emph{proper} if for every edge $(u,v) \in E$, it holds $\gamma(u) = \gamma(v) \pm 1$.
A \emph{level planar drawing} of $(V,E,\gamma)$ maps each vertex $v$ of each level $V_i$ to a point on line $y = i$, denoted by $L_i$, and each edge to a $y$-monotone curve between its endpoints so that no two edges intersect.
A level graph is \emph{level planar} if it admits a level planar drawing.
A linear-time algorithm for testing level planarity was presented by J{\"u}nger and Leipert in~\cite{jl-lpelt-02}.

A \emph{clustered-level graph} (\emph{cl-graph}) $(V,E,\gamma,T)$ is a level graph $(V,E,\gamma)$ equipped with a \emph{cluster hierarchy} $T$, that is, a rooted tree where each leaf is an element of $V$ and each internal node $\mu$, called \emph{cluster}, represents the subset $V_\mu$ of $V$ composed of the leaves of the subtree of $T$ rooted at $\mu$.
A \emph{clustered-level planar drawing} (\emph{cl-planar} drawing) of $(V,E,\gamma,T)$ is a level planar drawing of level graph $(V,E,\gamma)$ such that: (1) each cluster $\mu$ is represented by a simple region enclosing all and only the vertices in $V_\mu$; (2) no edge intersects the boundary of a cluster more than once; (3) no two cluster boundaries intersect each other; and (4) the intersection of $L_i$ with any cluster $\mu$ is a straight-line segment, that is, the vertices of $V_i$ that belong to $\mu$ are consecutive along $L_i$.
A cl-graph is \emph{clustered-level planar} (\emph{cl-planar}) if it admits a cl-planar drawing.
\clp (\clpshort) is the problem of testing whether a given cl-graph is cl-planar.
The \clpshort problem was introduced by Forster and Bachmaier~\cite{fb-clp-04}, who showed a polynomial-time testing algorithm for the case in which the level graph is a proper hierarchy and the clusters are level-connected.

A \emph{$\mathcal{T}$-level graph} (also known as \emph{generalized $k$-ary tanglegram}) $(V,E,\gamma,\mathcal{T})$ is a level graph $(V,E,\gamma)$ equipped with a set $\mathcal{T}=T_1,\dots,T_k$ of trees such that the leaves of $T_i$ are the vertices of level $V_i$ of $(V,E,\gamma)$, for $1 \leq i \leq k$.
A \emph{$\mathcal{T}$-level planar drawing} of $(V,E,\gamma,\mathcal{T})$ is a level planar drawing of $(V,E,\gamma)$ such that, for $i = 1,\dots,k$, the order in which the vertices of $V_i$ appear along $L_i$ is \emph{compatible} with $T_i$, that is, for each node $w$ of $T_i$, the leaves of the subtree of $T_i$ rooted at $w$ appear consecutively along $L_i$.
A $\mathcal{T}$-level graph is \emph{$\mathcal{T}$-level planar} if it admits a $\mathcal{T}$-level planar drawing.
\tlp is the problem of testing whether a given $\mathcal{T}$-level graph is $\mathcal{T}$-level planar.
The \tlp problem was introduced by Wotzlaw {\em et al.}~\cite{wsp-gktlg-12}, who showed a quadratic-time algorithm for the case in which the number of vertices at each level is bounded by a constant.

The definition of {\em proper} naturally extends to cl-graphs and $\mathcal{T}$-level graphs. Note that, given any non-proper level graph $G$ it is easy to construct a proper level graph $G'$ that is level planar if and only if $G$ is level planar. However, as mentioned above, there exists no trivial transformation from a non-proper cl-graph (a non-proper $\mathcal{T}$-level graph) to an equivalent proper cl-graph (resp., an equivalent proper $\mathcal{T}$-level graph).

In this paper we show that \clp and \tlp are \NPC for non-proper instances. Conversely, we show that both problems are polynomial-time
solvable for proper instances. Our results have several consequences:
(1) They narrow the gap between polynomiality and \NPCN in the
classification of Schaefer~\cite{s-ttphtpv-13} (see
Fig.~\ref{fi:schema}). The reduction of Schaefer between \tlp and \sefe holds for proper instances~\cite{s-ttphtpv-13}. (2) They allow to partially answer a question from~\cite{s-ttphtpv-13} asking whether a reduction exists from \clpshort to \sefe. We show that such a reduction exists for proper instances and that a reduction from general instances would imply the \NPHN of \sefe. (3) They improve on~\cite{fb-clp-04} and~\cite{wsp-gktlg-12} by extending the classes of instances which are decidable in polynomial-time for \clpshort and \tlp, respectively. (4) They provide the first, as far as we know, \NPCN for a problem that has all the constraints of clustered planarity problem (and some more).

\begin{figure}[tb]
\centering
  \includegraphics[width=0.65\textwidth]{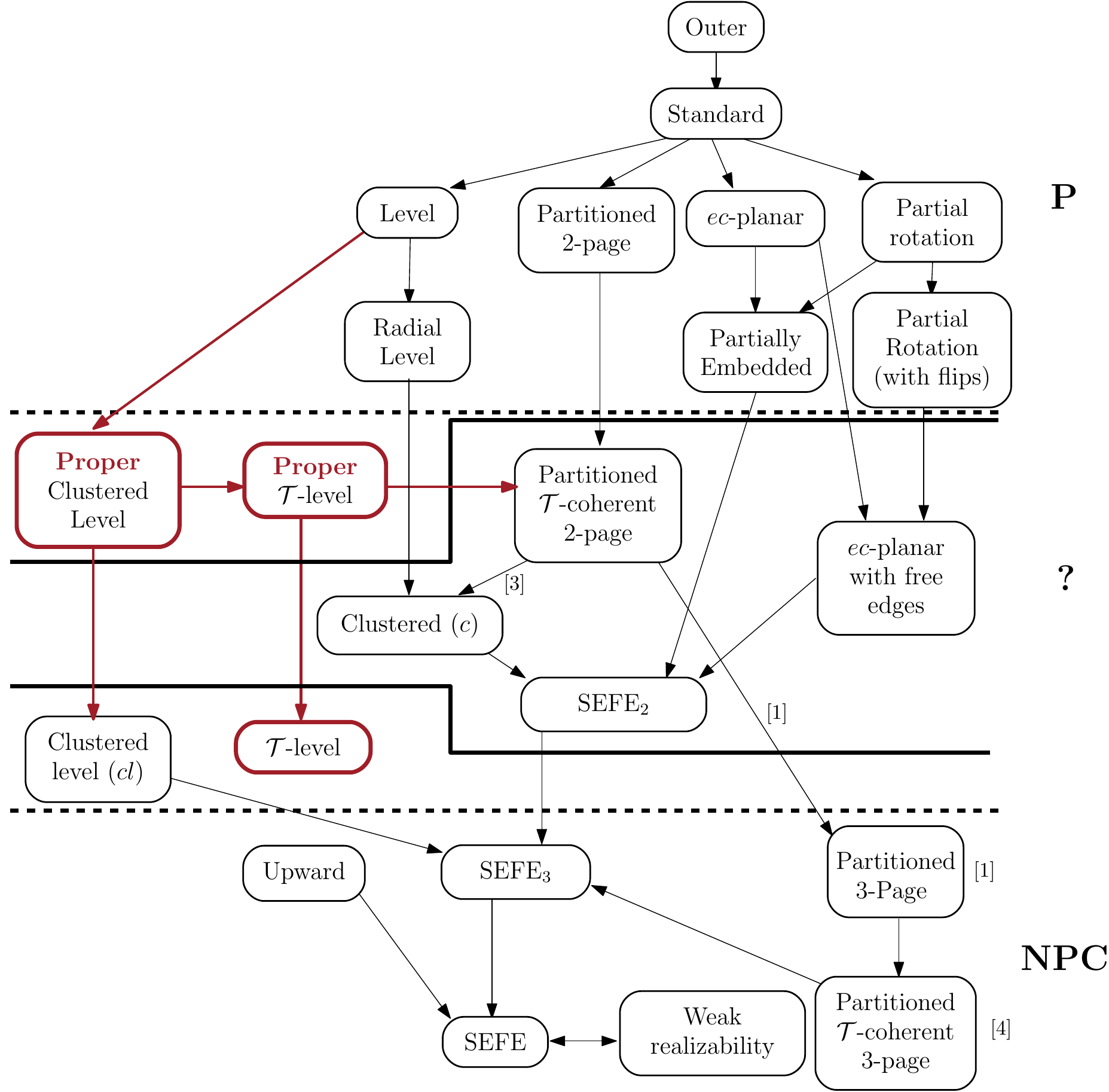}
  \caption{Updates on the classification proposed by Schaefer
    in~\cite{s-ttphtpv-13}. Dashed lines represent the boundaries
    between problems that were known to be polynomial-time solvable,
    problems that were known to be \NPC, and problems whose complexity
    was unknown before this paper. Solid lines represent the new
    boundaries according to the results of this paper. All the
    arcs representing reductions that can be transitively inferred are
    omitted. Results proved after~\cite{s-ttphtpv-13} are equipped
    with references. Reductions and classes introduced in
    this paper are red. The prefix ``proper'' has been added to two
    classes in~\cite{s-ttphtpv-13} to better clarify their
    nature. }\label{fi:schema}
\end{figure}
\nocite{adn-ocsprs-13-tr,adn-osnpsp-14,ad-drbsc-14}

The paper is organized as follows. The \NPCN proofs are in Section~\ref{se:hardness}, while the algorithms are in Section~\ref{se:poly}. We conclude with open problems in Section~\ref{se:conclusions}.

\section{NP-Hardness} \label{se:hardness}

In this section we prove that the \tlp and the \clpshort problems are
\NPC. In both cases, the \NP-hardness is proved by means of a
polynomial-time reduction from the \NPC problem \betp~\cite{top-o-79},
that takes as input a finite set $A$ of $n$ objects and a set $C$ of
$m$ ordered triples of distinct elements of $A$, and asks whether a
linear ordering $\mathcal{O}$ of the elements of $A$ exists such that
for each triple $\langle \alpha,\beta,\delta \rangle$ of $C$, we have
either $\mathcal{O} =<$$\dots, \alpha, \dots, \beta,
\dots,\delta,\ldots$$>$ or $\mathcal{O} = <$$\dots, \delta, \dots, \beta,
\dots, \alpha, \ldots$$>$.

\begin{theorem}\label{th:tl-hard}
  \tlp is \NPC.
\end{theorem}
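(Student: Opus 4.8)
The plan is to give a polynomial-time reduction from \betp. Membership in \NP is easy: a witness is simply, for each level $V_i$, a linear order of the vertices of $V_i$ that is compatible with the tree $T_i$; one checks in polynomial time that each tree constraint is respected and that the resulting level drawing (with vertices placed in the prescribed orders) is planar, which is a standard level-planarity test. So the core of the proof is the \NP-hardness construction.

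Given a \betp instance with object set $A=\{a_1,\dots,a_n\}$ and triple set $C=\{c_1,\dots,c_m\}$, I would build a $\mathcal{T}$-level graph \tlinstance{} as follows. The idea is to use one ``gadget level'' (or a small constant number of levels) that hosts $n$ vertices $x_1,\dots,x_n$, one per object of $A$, whose left-to-right order on that level is forced to be exactly the sought linear ordering $\mathcal{O}$; the tree for that level is trivial (a star), so a priori any permutation is allowed. For each triple $c_\ell=\langle\alpha,\beta,\delta\rangle$ I would attach a \emph{betweenness gadget}: a small collection of extra vertices on a few additional levels, connected by edges to the copies of $\alpha$, $\beta$, $\delta$, together with a tree constraint on one of the new levels, so that the gadget admits a $\mathcal{T}$-level planar drawing if and only if, in the order of $x_1,\dots,x_n$, the vertex for $\beta$ lies between those for $\alpha$ and $\delta$ (in either direction). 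The natural mechanism is to route, from a common apex vertex placed on a level above, three $y$-monotone paths down to $\alpha,\beta,\delta$ and use a tree on an intermediate level to force the ``$\beta$-path'' to separate the plane between the ``$\alpha$-path'' and the ``$\delta$-path''; the crucial point is that the non-proper edges are allowed to span many levels, so a single long edge can ``jump over'' the vertices of intervening objects and pin down the relative order of just the three relevant ones. Making the $m$ gadgets mutually non-interfering is handled by giving each gadget its own private levels (above or below the shared object level), so that the only interaction among gadgets is through the shared order of $x_1,\dots,x_n$.

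I would then prove the two directions of correctness. If \betp is a yes-instance with ordering $\mathcal{O}$, place $x_1,\dots,x_n$ in that order and, for each triple, draw its gadget in the unique planar way permitted by the betweenness relation, which is possible exactly because $\mathcal{O}$ satisfies all triples; these partial drawings can be combined since the gadgets live on disjoint levels. Conversely, from any $\mathcal{T}$-level planar drawing of the constructed instance, read off the left-to-right order of $x_1,\dots,x_n$ on the shared level; planarity of each gadget forces the corresponding betweenness constraint to hold, so this order witnesses a yes-instance of \betp. Finally, the construction clearly has size polynomial (indeed linear) in $n+m$ and is computable in polynomial time.

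The main obstacle is the design of the betweenness gadget: it must (i) use only level-graph and tree-compatibility constraints, (ii) enforce the ``$\beta$ between $\alpha$ and $\delta$'' condition in both orientations and nothing more (no spurious constraints that could make a genuine yes-instance infeasible), and (iii) interact with other gadgets solely through the shared object order. Getting a gadget that is simultaneously ``tight'' (forces exactly betweenness) and ``local'' (does not over-constrain) is the delicate part; the freedom afforded by non-proper edges spanning arbitrarily many levels is precisely what makes such a gadget possible, and is the reason the same approach fails for proper instances.
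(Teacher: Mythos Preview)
Your plan is essentially the paper's approach: reduce from \betp, encode the $n$ objects as $n$ vertex-disjoint $y$-monotone paths between two apex vertices, and give each triple its own block of private levels whose tree constraints enforce the betweenness condition on the three relevant paths, with non-proper edges letting the uninvolved paths skip those levels.

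The one concrete ingredient you leave open---the betweenness gadget---is where the paper supplies the idea you are missing. A single tree on one level cannot enforce ``$\beta$ between $\alpha$ and $\delta$'': a binary tree on three leaves can only forbid \emph{one} specified element from being medial. The paper therefore uses \emph{two} consecutive levels per triple $\langle\alpha,\beta,\delta\rangle$: on level $2i$ the tree groups $\{u_\beta(i),u_\delta(i)\}$, forbidding $\alpha$ from the middle; on level $2i+1$ the tree groups $\{u'_\alpha(i),u'_\beta(i)\}$, forbidding $\delta$ from the middle. Since the three paths cannot cross, the two constraints together leave $\beta$ as the only possible middle element, in either orientation. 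Your phrase ``a tree constraint on one of the new levels'' would not close this gap by itself; once you replace it with this two-level device, your correctness argument and size analysis go through exactly as in the paper.
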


\begin{figure}[tb]
  \centering
  \subfigure[]{\includegraphics[height=.4\textwidth,page=1]{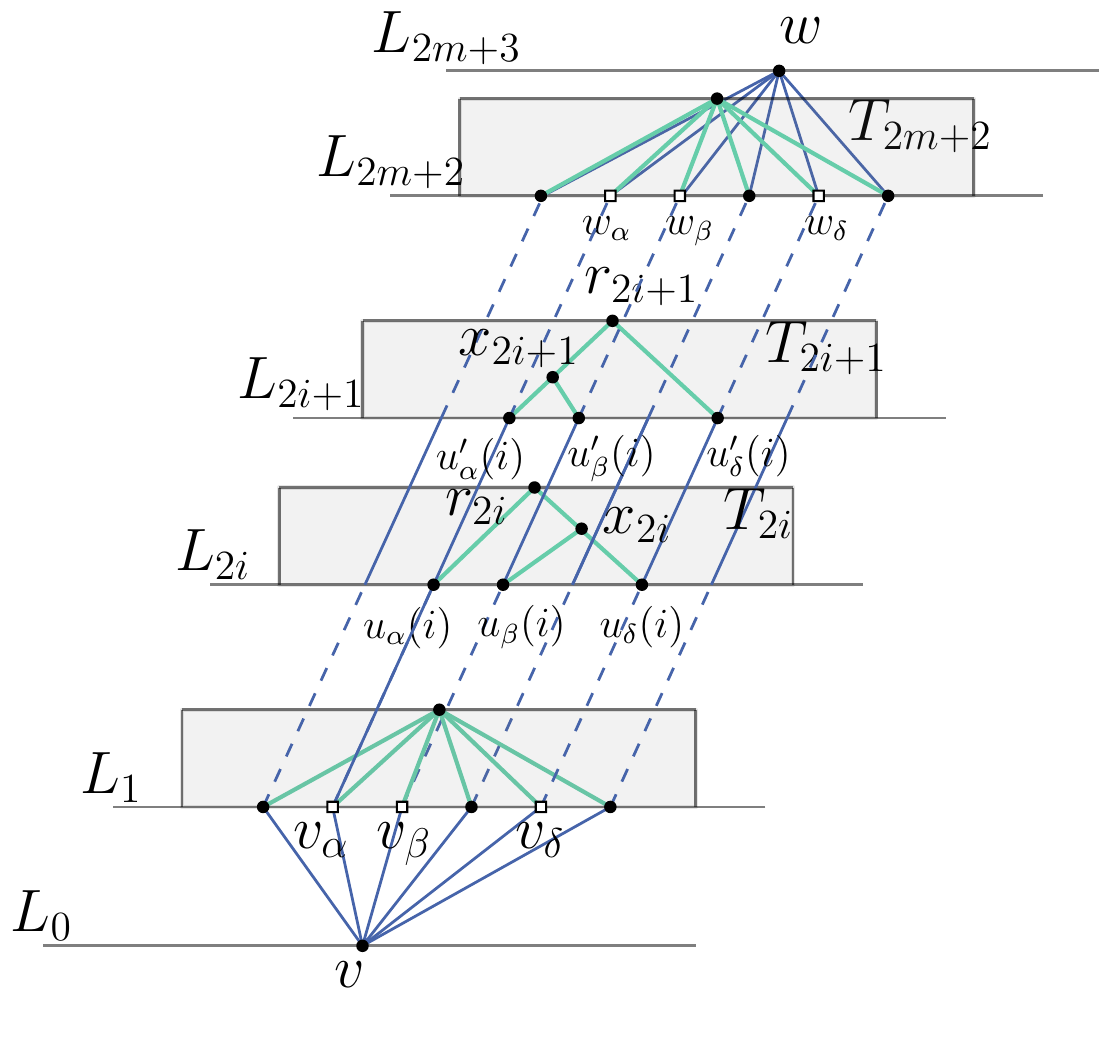}}
  \subfigure[]{\includegraphics[height=.4\textwidth,page=2]{img/hardness.pdf}}
  \caption{Illustrations for the proof of (a) Theorem~\ref{th:tl-hard} and (b) Theorem~\ref{th:cl-hard}.}\label{fig:hardness}
\end{figure}

\begin{proof}
The problem trivially belongs to \NP. We prove the \NPHN. Given an instance $\langle A, C \rangle$ of \betp, we construct an equivalent  instance \tlinstance{} of \tlp as follows. Let $A=\{1,2,\dots,n\}$ and let  $m = |C|$. Graph $(V,E)$ is composed of a set of paths connecting two vertices $v$ and $w$. Refer to Fig.~\ref{fig:hardness}(a).

Initialize $V=\{v,w\}$ and $E=\emptyset$, with $\gamma(v) = 0$ and $\gamma(w)=2m+3$. Let $T_0 \in \mathcal{T}$ and $T_{2m+3} \in \mathcal{T}$ be trees with a single node $v$ and $w$, respectively.

For each $j = 1,\dots,n$, add two vertices $v_j$ and $w_j$ to $V$, with $\gamma(v_j) = 1$ and $\gamma(w_j) = 2m+2$. Add edges $(v,v_j)$ and $(w,w_j)$ to $E$. Also, let $T_1 \in \mathcal{T}$ and $T_{2m+2} \in \mathcal{T}$ be two stars whose leaves are all the vertices of levels $V_1$ and $V_{2m+2}$, respectively.
Further, for each $j = 1,\dots,n$, we initialize variable $last(j) = v_{j}$.

Then, for each $i = 1,\dots,m$, consider the triple $t_i = \langle \alpha,\beta,\gamma \rangle$. Add six vertices $u_\alpha(i)$, $u'_\alpha(i)$, $u_\beta(i)$, $u'_\beta(i)$, $u_\delta(i)$, and $u'_\delta(i)$ to $V$ with $\gamma(u_\alpha(i)) = \gamma(u_\beta(i)) = \gamma(u_\delta(i)) = 2i$ and $\gamma(u'_\alpha(i)) = \gamma(u'_\beta(i)) = \gamma(u'_\delta(i)) = 2i+1$. Also, add edges $(last(\alpha),u_\alpha(i))$, $(last(\beta),u_\beta(i))$, $(last(\delta),u_\delta(i))$, $(u_\alpha(i), u'_\alpha(i))$, $(u_\beta(i), u'_\beta(i))$, and $(u_\gamma(i), u'_\gamma(i))$ to $E$. Further, set $last(\alpha) = u'_\alpha(i)$, $last(\beta) = u'_\beta(i)$, and $last(\delta) = u'_\delta(i)$. Let $T_{2i} \in \mathcal{T}$ be a binary tree with a root $r_{2i}$, an internal node $x_{2i}$ and a leaf $u_\alpha(i)$ both adjacent to $r_{2i}$, and with leaves $u_\beta(i)$ and $u_\delta(i)$ both adjacent to $x_{2i}$. Moreover, let $T_{2i+1}  \in \mathcal{T}$ be a binary tree with a root $r_{2i+1}$, an internal node $x_{2i+1}$ and a leaf $u'_\delta(i)$ both adjacent to $r_{2i+1}$, and with leaves $u'_\alpha(i)$ and $u'_\beta(i)$ both adjacent to $x_{2i+1}$.

Finally, for each $j = 1,\dots,n$, add an edge $(last(j),w_{j})$ to $E$.

The reduction is easily performed in $O(n+m)$ time. We prove that \tlinstance{} is $\mathcal{T}$-level planar if and only if $\langle A, C \rangle$ is a positive instance of \betp.

Suppose that \tlinstance{} admits a $\mathcal{T}$-level planar drawing $\Gamma$. Consider the left-to-right order $\mathcal{O}_1$ in which the vertices of level $V_1$ appear along $L_1$. Construct an order $\mathcal{O}$ of the elements of $A$ such that $\alpha \in A$ appears before $\beta \in A$ if and only if $v_{\alpha} \in V_1$ appears before $v_{\beta} \in V_1$ in $\mathcal{O}_1$.
In order to prove that $\mathcal{O}$ is a positive solution for $\langle A, C \rangle$, it suffices to prove that, for each triple $t_i = \langle \alpha, \beta, \delta \rangle \in C$, vertices $v_\alpha$, $v_\beta$, and $v_\delta$ appear either in this order or in the reverse order in $\mathcal{O}_1$. Note that tree $T_{2i}$ enforces $u_\alpha(i)$ not to lie between $u_\beta(i)$ and $u_\delta(i)$ along $L_{2i}$; also, tree $T_{2i+1}$ enforces $u'_\delta(i)$ not to lie between $u'_\alpha(i)$ and $u'_\beta(i)$ along $L_{2i+1}$. Since the three paths connecting $v$ and $w$ and passing through $v_\alpha$, $v_\beta$, and $v_\delta$ do not cross each other in $\Gamma$ and since they contain $u_\alpha(i)$ and $u'_\alpha(i)$, $u_\beta(i)$ and $u'_\beta(i)$, and $u_\delta(i)$ and $u'_\delta(i)$, respectively, we have that $v_\alpha$, $v_\beta$, and $v_\delta$ appear either in this order or in the reverse order in $\mathcal{O}_1$.

Suppose that an ordering $\mathcal{O}$ of the elements of $A$ exists that is a positive solution of \betp for instance $\langle A, C \rangle$.
In order to construct $\Gamma$, place the vertices of $V_1$ and $V_{2m+2}$ along $L_1$ and $L_{2m+2}$ in such a way that vertices $v_j \in V_1$ and $w_j \in V_{2m+2}$, for $j=1,\dots,n$, are assigned $x$-coordinate equal to $s$ if $j$ is the $s$-th element of $\mathcal{O}$. Also, for $i = 1, \dots, m$, let $t_i = \langle \alpha, \beta, \delta \rangle \in C$. Place vertices $u_\lambda(i)$ and $u'_\lambda(i)$, with $\lambda \in \{\alpha, \beta, \delta\}$, on $L_{2i}$ and $L_{2i+1}$, respectively, in such a way that $u_\lambda(i)$ and $u'_\lambda(i)$ are assigned  $x$-coordinate equal to $s$ if $\lambda$ is the $s$-th element of $\mathcal{O}$.
Finally, place $v$ and $w$ at any points on $L_0$ and $L_{2m+3}$, respectively, and draw the edges of $E$ as straight-line segments.
We prove that $\Gamma$ is a $\mathcal{T}$-level planar drawing of \tlinstance{}.
First note that, by construction, $\Gamma$ is a level planar drawing of $(V,E,\gamma)$. Further, for each $i = 1,\dots,m$, vertices $u_\alpha(i)$, $u_\beta(i)$, and $u_\delta(i)$ appear along $L_{2i}$ either in this order or in the reverse order; in both cases, the order is compatible with tree $T_{2i}$. Analogously, vertices $u'_\alpha(i)$, $u'_\beta(i)$, and $u'_\delta(i)$ appear along $L_{2i+1}$ either in this order or in the reverse order; in both cases, the order is compatible with tree $T_{2i+1}$. Finally, the order in which vertices of $V_0$, $V_1$, $V_{2m+2}$, and $V_{2m+3}$ appear along $L_0$, $L_1$, $L_{2m+2}$, and $L_{2m+3}$, respectively, are trivially compatible with $T_0$, $T_1$, $T_{2m+2}$, and $T_{2m+3}$.
\end{proof}

Note that the reduction described in Theorem~\ref{th:tl-hard} can be modified in such a way that $\mathcal{T}$ contains only binary trees by removing levels $V_1$ and $V_{2m+2}$. Indeed, the presence of these two levels was only meant to simplify the description of the relationship between the order of the elements of $A$ and the order of the paths between $v$ and $w$.

\begin{theorem}\label{th:cl-hard}
  \clp is \NPC.
\end{theorem}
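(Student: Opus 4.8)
The plan is to reduce from \betp exactly as in the proof of Theorem~\ref{th:tl-hard}, but replacing the per-level trees of $\mathcal{T}$ by clusters of a single hierarchy $T$. The problem is clearly in \NP, so I focus on \NPHN. Given an instance $\langle A,C\rangle$ of \betp, I would take as $(V,E,\gamma)$ the same level graph constructed in Theorem~\ref{th:tl-hard} — the bundle of paths connecting $v$ and $w$, one routed through each element of $A$, together with the six vertices $u_\lambda(i),u'_\lambda(i)$, for $\lambda\in\{\alpha,\beta,\delta\}$, inserted on levels $2i$ and $2i+1$ for each triple $t_i=\langle\alpha,\beta,\delta\rangle$ — and then define the cluster hierarchy $T$ by letting its root $\rho$ have as children all vertices not listed next, together with, for every triple $t_i$, a cluster $\mu_i$ with $V_{\mu_i}=\{u_\beta(i),u_\delta(i)\}$ and a cluster $\mu'_i$ with $V_{\mu'_i}=\{u'_\alpha(i),u'_\beta(i)\}$. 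These vertex sets are pairwise disjoint, so $T$ is a valid hierarchy, and the construction takes $O(n+m)$ time; see Fig.~\ref{fig:hardness}(b). Clusters $\mu_i$ and $\mu'_i$ are intended to play the roles of trees $T_{2i}$ and $T_{2i+1}$ in Theorem~\ref{th:tl-hard}.

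For the forward direction I would take a cl-planar drawing $\Gamma$, read off the left-to-right order $\mathcal{O}_1$ of $V_1$, and let $\mathcal{O}$ be the order of $A$ it induces. The point replacing the tree constraints is that condition (4) in the definition of a cl-planar drawing forces $L_{2i}\cap\mu_i$ to be a segment containing exactly $u_\beta(i)$ and $u_\delta(i)$; since $u_\alpha(i)$ is the only other vertex of $V_{2i}$, it cannot lie between $u_\beta(i)$ and $u_\delta(i)$ along $L_{2i}$, and symmetrically $u'_\delta(i)$ cannot lie between $u'_\alpha(i)$ and $u'_\beta(i)$ along $L_{2i+1}$ — exactly the consecutivity consequences that $T_{2i}$ and $T_{2i+1}$ had in Theorem~\ref{th:tl-hard}. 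From here the argument is verbatim: the three paths connecting $v$ and $w$ through $v_\alpha,v_\beta,v_\delta$ pairwise share only $v$ and $w$ and do not cross in $\Gamma$, hence appear in the same relative left-to-right order on every level; combining this with ``$\alpha$ is not the middle element'' (from $\mu_i$) and ``$\delta$ is not the middle element'' (from $\mu'_i$) forces $\beta$ to be the middle element, i.e., $v_\alpha,v_\beta,v_\delta$ occur in this order or its reverse in $\mathcal{O}_1$, which is precisely what $t_i$ demands of $\mathcal{O}$.

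For the converse I would start from a betweenness ordering $\mathcal{O}$ and place the vertices precisely as in Theorem~\ref{th:tl-hard}, drawing all edges as straight-line segments; this is already a level planar drawing of $(V,E,\gamma)$. Since $\mathcal{O}$ satisfies $t_i$, vertex $u_\beta(i)$ lies between $u_\alpha(i)$ and $u_\delta(i)$ on $L_{2i}$, so $u_\beta(i)$ and $u_\delta(i)$ are consecutive on $L_{2i}$ (there being only three vertices on that level), and $\mu_i$ can be realized as a thin region around them; likewise $u'_\alpha(i)$ and $u'_\beta(i)$ are consecutive on $L_{2i+1}$ and $\mu'_i$ is drawn as a thin region around them. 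Taking all these regions sufficiently flat in the $y$-direction makes them pairwise disjoint and ensures that each edge incident to a vertex of a cluster crosses that cluster's boundary exactly once while every other edge is routed outside it; hence $\Gamma$ is cl-planar and conditions (1)–(4) hold.

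I expect the only real work to lie in the ``if'' direction: checking that these flat cluster regions can be drawn simultaneously as simple regions meeting conditions (1)–(4) — in particular that an edge with both endpoints outside a cluster never needs to dip into it, and that the regions sitting on the consecutive levels $2i+1$ and $2i+2$ can be kept disjoint — rather than in any conceptual difficulty, since the betweenness bookkeeping is identical to that of Theorem~\ref{th:tl-hard}.
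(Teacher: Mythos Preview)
Your flat hierarchy does not work: the converse direction (positive \betp $\Rightarrow$ cl-planar) fails, and the worry you flag in your last paragraph is fatal rather than bookkeeping. Take a positive instance of \betp in which \emph{every} solution places, for some triple $t_i=\langle\alpha,\beta,\delta\rangle$, an element $j\notin\{\alpha,\beta,\delta\}$ strictly between $\beta$ and $\delta$; for instance $A=\{1,2,3,4\}$, $C=\{\langle1,2,3\rangle,\langle2,3,4\rangle,\langle1,2,4\rangle\}$, whose only solutions are $1,2,3,4$ and its reverse, with $3$ sitting between $\beta=2$ and $\delta=4$ in the third triple. In any level planar drawing of $(V,E,\gamma)$ the $n$ internally disjoint $v$--$w$ paths are non-crossing and hence meet every horizontal line in one fixed left-to-right order; your clusters force that order to be a \betp solution, so the edge of path $j$ spanning level $2i$ necessarily hits $L_{2i}$ strictly between $u_\beta(i)$ and $u_\delta(i)$. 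By condition~(4) the region for $\mu_i$ meets $L_{2i}$ in a segment containing both of those vertices, hence containing that hit point; since neither endpoint of that edge lies in $V_{\mu_i}=\{u_\beta(i),u_\delta(i)\}$, the edge must cross the boundary of $\mu_i$ at least twice, violating condition~(2). Thinning the region or rerouting the edge cannot help, because the relative order of the paths on $L_{2i}$ is forced by planarity. So your reduction sends a positive \betp instance to a non--cl-planar cl-graph.

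This is exactly why the paper does \emph{not} use a flat hierarchy. It builds a linear-depth nested chain $\mu_1\subset\nu_2\subset\mu_2\subset\nu_3\subset\cdots$ in which the cluster playing the role of your $\mu_i$ (called $\nu_{2i}$) contains $u_\beta(i)$, $u_\delta(i)$, \emph{and every vertex on levels $0,\dots,2i-1$}. Any long edge crossing $L_{2i}$ then has its lower endpoint inside $\nu_{2i}$ and its upper endpoint outside, so it pierces the boundary exactly once, and one can simply take the convex hull of the cluster's vertices as its region. Whether \clpshort is \npc for flat hierarchies is listed as an open problem in Section~\ref{se:conclusions}.
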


\begin{proof}
The problem trivially belongs to class \NP. We prove the \NPHN. Given an instance $\langle A, C \rangle$ of \betp, we construct an instance \tlinstance{} of \tlp as in the proof of Theorem~\ref{th:tl-hard}; then, starting from \tlinstance{}, we construct an instance \clinstance{} of \clpshort that is cl-planar if and only if \tlinstance{} is $\mathcal{T}$-level planar. This, together with the fact that \tlinstance{} is $\mathcal{T}$-level planar if and only if $\langle A, C \rangle$ is a positive instance of \betp, implies the \NPHN of \clpshort. Refer to Fig.~\ref{fig:hardness}(b).

Cluster hierarchy $T$ is constructed as follows. Initialize $T$ with a root $\rho$. Let $w \in V_{2m+3}$ and $w_j \in V_{2m+2}$, for $j = 1,\dots,n$, be leaves of $T$ that are children of $\rho$; add an internal node $\mu_{2m+1}$ to $T$ as a child of $\rho$.
Next, for $i = m, \dots, 1$, let $u'_\delta(i)$ be a leaf of $T$ that is child of $\mu_{2i+1}$; add an internal node $\nu_{2i+1}$ to $T$ as a child of $\mu_{2i+1}$; then, let $u'_\alpha(i)$ and $u'_\beta(i)$ be leaves of $T$ that are children of $\nu_{2i+1}$; add an internal node $\mu_{2i}$ to $T$ as a child of $\nu_{2i+1}$. Further, let $u_\alpha(i)$ be a leaf of $T$ that is a child of $\mu_{2i}$; add an internal node $\nu_{2i}$ to $T$ as a child of $\mu_{2i}$; then, let $u_\beta(i)$ and $u_\delta(i)$ be leaves of $T$ that are children of $\nu_{2i}$; add an internal node $\mu_{2i-1}$ to $T$ as a child of $\nu_{2i}$.
Finally, let vertices $v \in V_0$ and $v_j \in V_{1}$, for $j = 1,\dots,n$, be leaves of $T$ that are children of $\mu_1$.

We prove that \clinstance{} is cl-planar if and only if \tlinstance{} is $\mathcal{T}$-level planar.

Suppose that \clinstance{} admits a cl-planar drawing $\Gamma$. Construct a $\mathcal{T}$-level planar drawing $\Gamma^*$ of \tlinstance{} by removing from $\Gamma$ the clusters of $T$.
First, observe that the drawing of $(V,E,\gamma)$ in $\Gamma^*$ is level-planar, since it is level-planar in $\Gamma$.
Further, for each $i = 1,\dots,m$, vertex $u_\alpha(i)$ does not appear between $u_\beta(i)$ and $u_\gamma(i)$ along line $L_{2i}$, since $u_\beta(i), u_\gamma(i) \in \nu_{2i}$ and $u_\alpha(i) \notin \nu_{2i}$; analogously, vertex $u'_\delta(i)$ does not appear between $u'_\alpha(i)$ and $u'_\beta(i)$ along line $L_{2i+1}$, since $u'_\alpha(i), u'_\beta(i) \in \nu_{2i+1}$ and $u'_\delta(i) \notin \nu_{2i+1}$. Hence, the order of the vertices of $V_{2i}$ and $V_{2i+1}$ along $L_{2i}$ and $L_{2i+1}$, respectively, are compatible with trees $T_{2i}$ and $T_{2i+1}$.
Finally, the order in which vertices of $V_0$, $V_1$, $V_{2m+2}$, and $V_{2m+3}$ appear along lines $L_0$, $L_1$, $L_{2m+2}$, and $L_{2m+3}$, respectively, are trivially compatible with $T_0$, $T_1$, $T_{2m+2}$, and $T_{2m+3}$.

Suppose that \tlinstance{} admits a $\mathcal{T}$-level planar drawing $\Gamma^*$; we describe how to construct a cl-planar drawing $\Gamma$ of \clinstance{}. Assume that $\Gamma^*$ is a straight-line drawing, which is not a loss of generality~\cite{efln-sda-06}. Initialize $\Gamma = \Gamma^*$.
Draw each cluster $\alpha$ in $T$ as a convex region $R(\alpha)$ in $\Gamma$ slightly surrounding the border of the convex hull of its vertices 
and slightly surrounding the border of the regions representing the clusters that are its descendants in $T$. Let $j$ be the largest index such 
that $V_j$ contains a vertex of $\alpha$. Then, $R(\alpha)$ contains all and only the vertices that are descendants of $\alpha$ in $T$; moreover, 
any two clusters $\alpha$ and $\beta$ in $T$ are one contained into the other, hence $R(\alpha)$ and $R(\beta)$ do not cross; finally, we prove 
that no edge $e$ in $E$ crosses more than once the boundary of $R(\alpha)$ in $\Gamma$. First, if at least one end-vertex of $e$ belongs to 
$\alpha$, then $e$ and the boundary of $R(\alpha)$ cross at most once, given that $e$ is a straight-line segment and that $R(\alpha)$ is convex. 
All the vertices in $V_0\cup \dots \cup V_{j-1}$ and at least two vertices of $V_j$ belong to $\alpha$, hence their incident edges do not cross 
the boundary of $R(\alpha)$ more than once. Further, all the vertices in $V_{j+1}\cup \dots \cup V_{2m+3}$ have $y$-coordinates larger than every 
point of $R(\alpha)$, hence edges between them do not cross $R(\alpha)$. It remains to consider the case in which $e$ connects a vertex $x_1$ in 
$V_j$ not in $\alpha$ (there is at most one such vertex) with a vertex $x_2$ in $V_{j+1}\cup \dots \cup V_{2m+2}$; in this case $e$ and 
$R(\alpha)$ do not cross given that $x_1$ is outside $R(\alpha)$, that $x_2$ has $y$-coordinate larger than every point of $R(\alpha)$, and that 
$R(\alpha)$ is arbitrarily close to the convex hull of its vertices.
\end{proof}

\section{Polynomial-Time Algorithms}\label{se:poly}

In this section we prove that problems \tlp and \clpshort become polyomial-time solvable if restricted to proper instances.

\subsection{\tlp}\label{sse:tlp}

We start by describing a polynomial-time algorithm for \tlp. The algorithm is based on a reduction to the \emph{Simultanoues Embedding with Fixed Edges} problem for two graphs (\sefe), that is defined as follows.

A \emph{simultanoues embedding with fixed edges} (SEFE) of two graphs $\red{G_1}=(V,\red{E_1})$ and $\blue{G_2}=(V,\blue{E_2})$ on the same set of vertices $V$ consists of two planar drawings $\red{\Gamma_1}$ and $\blue{\Gamma_2}$ of \Gr{} and \Gb{}, respectively, such that each vertex $v\in V$ is mapped to the same point in both drawings and each edge of the \emph{common graph} $G_\cap = (V,\red{E_1}\cap\blue{E_2})$ is represented by the same simple curve in the two drawings. The \sefe problem asks whether a given pair of graphs \sefeinstance{} admits a SEFE~\cite{bkr-sepg-12}.
The computational complexity of the \sefe problem is unknown, but there exist polynomial-time algorithms for instances that respect some conditions~\cite{adfpr-tsetgibgt-11,bkr-sepg-12,br-drpse-13,br-spqacep-13,s-ttphtpv-13}. We are going to use a result by Bl\"asius and R\"utter~\cite{br-spqacep-13}, who proposed a quadratic-time algorithm for instances \sefeinstance{} of \sefe in which \Gr{} and \Gb{} are $2$-connected, and the common graph \Gint is connected.

In the analysis of the complexity of the following algorithms we assume that the internal nodes of the trees in $\mathcal{T}$ in any instance \tlinstance{} of \tlp and of tree $T$ in any instance \clinstance{} of \clpshort have at least two children. It is easily proved that this is not a loss of generality; also, this allows us to describe the size of the instances in terms of the size of their sets of vertices.

\begin{lemma}\label{le:TCOHERENTtoSEFE}
  Let \tlinstance{} be a proper instance of \tlp. There exists an
  equivalent instance \sefeinstance{*} of \sefe such that
  $\red{G_1^*}=(V^*,\red{E^*_1})$ and $\blue{G^*_2}=(V^*,\blue{E^*_2})$ are
  $2$-connected, and the common graph $G_\cap =
  (V^*,\red{E^*_1}\cap\blue{E^*_2})$ is connected. Further, instance
  \sefeinstance{*} can be constructed in linear time.
\end{lemma}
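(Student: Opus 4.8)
The plan is to encode the level-planarity structure of a proper $\mathcal{T}$-level graph together with the tree constraints into a pair of graphs whose simultaneous embedding mirrors, on each level, the freedom to choose among the vertex orderings compatible with $T_i$. The natural first move is to recall the classical reduction of \lp (for proper instances) to a planarity/embedding question: add two ``pole'' vertices $s$ and $t$, connect $s$ to all vertices of $V_1$ and $t$ to all vertices of $V_k$, and observe that a level-planar drawing corresponds to a planar embedding of this augmented graph in which the poles lie on the outer face and the rotation is consistent with the level structure. I would let $\red{G_1^*}$ be (essentially) this augmented level graph: vertex set $V^* = V \cup \{s,t\}$, red edges the edges of $E$ together with the spokes $s V_1$ and $t V_k$, plus a path or cycle through the poles to enforce $2$-connectivity. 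Since the instance is proper, every edge of $E$ goes between consecutive levels, so the red graph is a ``leveled'' graph whose planar embeddings are exactly the level-planar drawings.

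Next I would build $\blue{G_2^*}$ to carry the tree constraints. For each level $i$, the tree $T_i$ specifies which orderings of $V_i$ are admissible; the standard gadget realizing an ``ordering constrained by a tree'' inside a SEFE is to add the internal nodes of $T_i$ as new vertices and draw $T_i$ itself (leaves $= V_i$, plus the internal nodes) as part of the blue graph, together with an edge from the root of $T_i$ to, say, the pole on the appropriate side — this forces the leaves of each subtree to appear consecutively. Crucially the blue edges incident to $V_i$ are disjoint (as edge sets) from the red edges incident to $V_i$, so these tree edges are exclusive to $G_2^*$ and impose no constraint visible to $G_1^*$ except through the shared vertex positions. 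The common graph $G_\cap$ is then whatever we deliberately put in both: I would make it exactly the red edges on the poles plus possibly the spokes, i.e. a connected spanning-ish subgraph; by routing a single path $s = p_0 - p_1 - \dots - p_k = t$ through one vertex per level and declaring it common, $G_\cap$ is connected, and adding enough structure (e.g. the spoke stars, or a Hamiltonian-type cycle on the poles) makes both $G_1^*$ and $G_2^*$ $2$-connected. The linear-time bound is immediate since the total size is $O(|V| + \sum_i |T_i|) = O(|V^*|)$ under the standing assumption that internal tree nodes have $\ge 2$ children.

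The two implications of equivalence would then be argued as follows. From a $\mathcal{T}$-level planar drawing, one gets a level-planar drawing realizing admissible orders on every level; place the poles on the outer face and draw each $T_i$ in the thin strip just above line $L_i$ (its leaves already sit in a $T_i$-compatible order, so the tree can be drawn planarly there), route the common path and any common cycle without crossings — this yields a SEFE. Conversely, from a SEFE, the red drawing $\red{\Gamma_1^*}$ is a planar embedding of the augmented proper level graph with $s,t$ cofacial, hence (by the classical correspondence) gives a level-planar drawing of $(V,E,\gamma)$; and for each $i$, the blue copy of $T_i$ drawn with its root tied toward a pole forces the leaves of every subtree to be consecutive along $L_i$, so the induced order on $V_i$ is $T_i$-compatible. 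Here I must be careful that the red embedding genuinely forces the left-to-right order on each level (not merely cofaciality of the poles) — this is where the added path/cycle through the levels earns its keep, and it is essentially the same argument used to reduce proper \lp to planarity.

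**Main obstacle.** The delicate point is not the high-level encoding but the simultaneous satisfaction of all three structural requirements on the output instance — $2$-connectivity of both $\red{G_1^*}$ and $\blue{G_2^*}$ and connectivity of $G_\cap$ — without accidentally over-constraining the orderings. Forcing $2$-connectivity typically means threading extra common edges through the poles and possibly through each level, and one must check that these extra common edges do not forbid any ordering that a valid solution needs, i.e. that in the ``only if'' direction one can still route everything planarly. I expect the bulk of the proof to be a careful case check that the augmented red graph's planar embeddings are in bijection with level-planar drawings (handling the pole side-assignment and the consecutive appearance of each $V_i$), and a verification that the blue tree gadgets, glued at the poles, precisely cut out the $T_i$-compatible orders — with the $2$-connectivity/connectivity bookkeeping riding along as routine but unavoidable overhead.
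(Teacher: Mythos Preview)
Your plan has a genuine structural gap: putting the trees $T_i$ into the \emph{blue} graph only, while the level graph lives in the \emph{red} graph, leaves you with no mechanism by which the tree constraints can influence the level orderings. In the combinatorial formulation of \sefe, the only data the two embeddings share is the rotation system of the \emph{common} graph $G_\cap$; exclusive blue edges impose no constraint whatsoever on the rotation of exclusive red edges at a shared vertex. Your proposed $G_\cap$ is a single path through one vertex per level (plus perhaps a cycle on the poles), which carries essentially no information about the cyclic order of the remaining vertices of $V_i$. So even if $\blue{\Gamma_2}$ embeds each $T_i$ planarly, nothing forces the leaves of a subtree to be consecutive in whatever ordering of $V_i$ the red embedding realises. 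Your geometric intuition (``draw $T_i$ in the thin strip just above $L_i$'') silently assumes that the red drawing already places $V_i$ on a horizontal line, but that is exactly what a generic planar embedding of your red graph does not guarantee. Relatedly, the internal nodes of the $T_i$ must belong to the shared vertex set $V^*$, and in your red graph they are isolated (or attached only along the thin common path), which immediately kills $2$-connectivity of $\red{G_1^*}$; symmetrically, the leaves of $T_i$ have degree one in blue unless you add much more blue structure, so $\blue{G_2^*}$ is not $2$-connected either.

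The paper avoids all of this by taking the opposite design decision: the trees $\overline{T_i}$ \emph{and} two auxiliary stars $P_i,Q_i$ per level are placed in the \emph{common} graph, strung along a big common cycle $t_1,\dots,t_k,q_k,p_k,\dots,q_1,p_1$. Each original vertex $u\in V_i$ is represented by up to three distinct vertices $u(\overline{T_i})$, $u(P_i)$, $u(Q_i)$, and the exclusive edges --- alternating between $\red{G_1^*}$ and $\blue{G_2^*}$ according to the parity of $i$ --- link these copies and encode the original edges $(u,v)\in E$ as edges between leaves of $Q_i$ and $P_{i+1}$. Because the trees sit in $G_\cap$, their leaf order is the same in both embeddings and \emph{is} the level order; the parity-alternation is what lets one prove $2$-connectivity of both $\red{G_1^*}$ and $\blue{G_2^*}$ and what makes the no-crossing argument go through. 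Your sketch would need to be rebuilt around a common subgraph rich enough to carry the per-level orderings before the equivalence argument can even start.
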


\begin{proof}
We describe how to construct instance \sefeinstance{*}. Refer to Fig.~\ref{fig:TCOHERENTtoSEFE}.

\begin{figure}[tb]
  \centering
  \subfigure[]{\includegraphics[height=.28\textwidth,page=2]{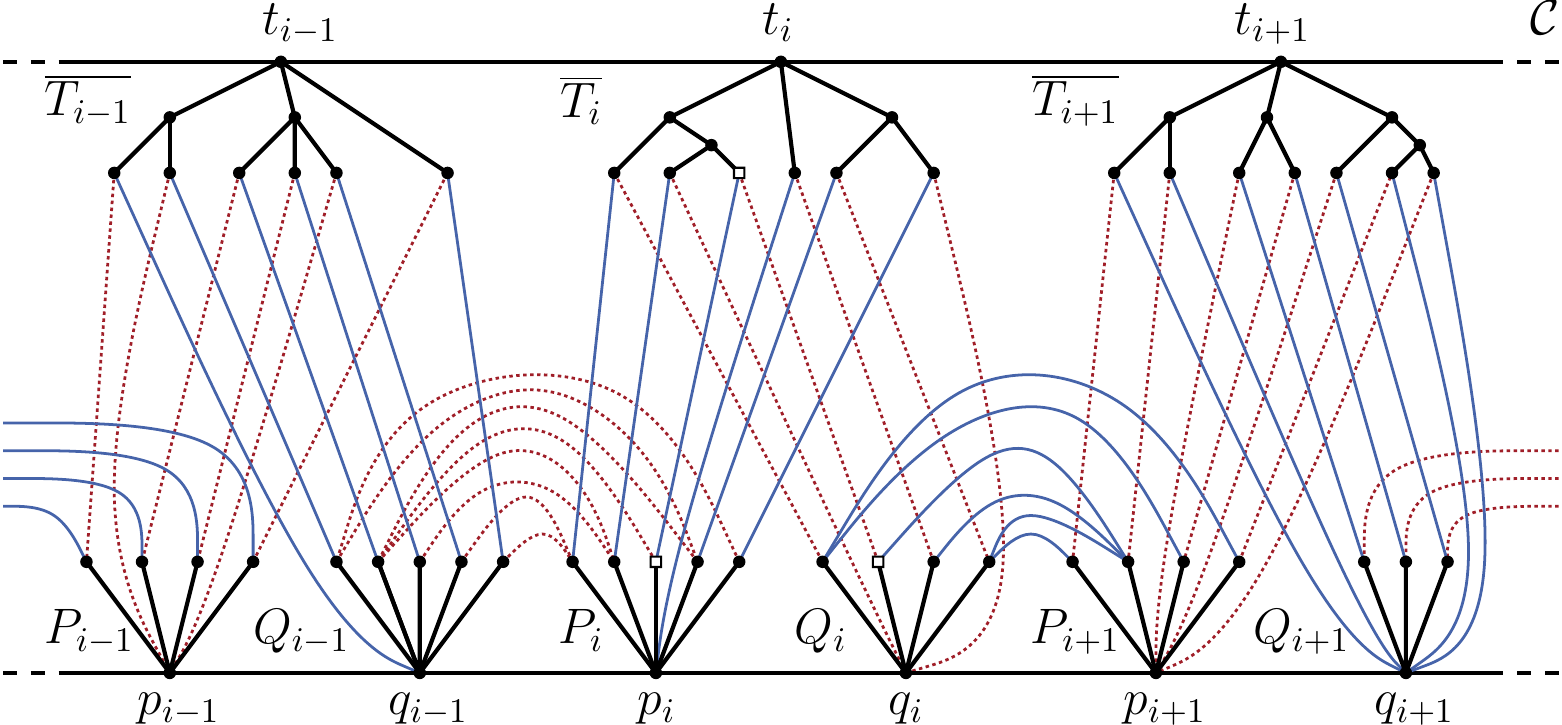}}
  \subfigure[]{\includegraphics[height=.28\textwidth,page=1]{img/k-ary.pdf}}
  \caption{Illustration for the proof of
    Lemma~\ref{le:TCOHERENTtoSEFE}. Index $i$ is assumed to be even. (a) A $T$-level planar drawing $\Gamma$ of instance \tlinstance{}. (b) The {SEFE} $\langle \red{\Gamma_1},\blue{\Gamma_2}\rangle $ of instance \sefeinstance{*} of \sefe corresponding to $\Gamma$. Correspondence between a vertex $u \in V_i$ and leaves $u(\overline{T_i}) \in \overline{T_i}$, $u(P_i) \in P_i$, and $u(Q_i) \in Q_i$ is highlighted by representing all such vertices as white boxes.}\label{fig:TCOHERENTtoSEFE}
\end{figure}

Graph \Gint contains a cycle $\mathcal{C} = t_1, t_2, \dots,$ $t_k, q_k, p_k, q_{k-1}, p_{k-1}, \dots, q_1,p_1$, where $k$ is the number of levels of \tlinstance{}. For each $i=1,\dots,k$, graph \Gint contains a copy $\overline{T_i}$ of tree $T_i \in \mathcal{T}$, whose root is identified with vertex $t_i$, and contains two stars $P_i$ and $Q_i$ centered at vertices $p_i$ and $q_i$, respectively, whose number of leaves is as follows. For each vertex $u \in V_i$ such that an edge $(u,v) \in E$ exists connecting $u$ to a vertex $v \in V_{i-1}$, star $P_i$ contains a leaf vertex $u(P_i)$; also, for each vertex $u \in V_i$ such that an edge $(u,v) \in E$ exists connecting $u$ to a vertex $v \in V_{i+1}$, star $Q_i$ contains a leaf vertex $u(Q_i)$. We also denote by $u(\overline{T_i})$ a leaf of $\overline{T_i}$ corresponding to vertex $u \in V_i$.

Graph \Gr{*} contains \Gint plus a set of edges defined as follows.
For $i = 1,\dots,k$, consider each vertex $u \in V_i$. Suppose that $i$ is even. Then, \Gr{*} has an edge connecting the leaf $u(\overline{T_i})$ of $\overline{T_i}$ corresponding to $u$ with either the leaf $u(Q_i)$ of $Q_i$ corresponding to $u$, if it exists, or with the center $q_i$ of $Q_i$, otherwise; also, for each edge in $E$ connecting a vertex $u \in V_i$ with a vertex $v \in V_{i-1}$, graph \Gr{*} has an edge connecting the leaf $u(P_i)$ of $P_i$ corresponding to $u$ with the leaf $v(Q_{i-1})$ of $Q_{i-1}$ corresponding to $v$ (such leaves exist by construction). Suppose that $i$ is odd. Then, graph \Gr{*} has an edge between $u(\overline{T_i})$ and either $u(P_i)$, if it exists, or the center $p_i$ of $P_i$, otherwise.

Graph \Gb{*} contains \Gint plus a set of edges defined as follows.
For $i = 1,\dots,k$, consider each vertex $u \in V_i$. Suppose that $i$ is odd. Then, \Gb{*} has an edge connecting $u(\overline{T_i})$ with either the leaf $u(Q_i)$ of $Q_i$ corresponding to $u$, if it exists, or with the center $q_i$ of $Q_i$, otherwise; also, for each edge in $E$ connecting a vertex $u \in V_i$ with a vertex $v \in V_{i-1}$, graph \Gb{*} has an edge $(u(P_i),v(Q_{i-1}))$. Suppose that $i$ is even. Then, graph \Gb{*} has an edge between $u(\overline{T_i})$ and either $u(P_i)$, if it exists, or $p_i$, otherwise.

It is easy to see that \Gint is connected and that \sefeinstance{*} can be constructed in polynomial time. We prove that \Gr{*} and \Gb{*} are $2$-connected, that is, removing any vertex $v$ disconnects neither \Gr{*} nor \Gb{*}. If $v$ is a leaf of either $\overline{T_i}$ or $P_i$ or $Q_i$, with $1 \leq i \leq k$, then removing $v$ disconnects neither \Gr{*} nor \Gb{*}, since \Gint remains connected.
If $v$ is an internal node (the root) of $\overline{T_i}$, or $P_i$, or $Q_i$, say of $\overline{T_i}$, with $1 \leq i \leq k$, then removing $v$ disconnects \Gint into $m = \deg(v)$ (resp. $m = \deg(v)-1$)  components, namely one component $\overline{T_i}(v)$ containing all the vertices of $\mathcal{C}$ (resp. all the vertices of $\mathcal{C}$, except for $v$) and $m-1$ subtrees $\overline{T_i}^j$ of $\overline{T_i}$, with $j=1,\dots,m-1$, rooted the children of $v$; however, by construction, each $\overline{T_i}^j$ is connected to $\overline{T_i}(v)$ via at least an edge $(u(\overline{T_i}),u(P_i)) \in \red{E^*_1}$ and an edge $(u(\overline{T_i}),u(Q_i)) \in \blue{E^*_2}$, or vice versa, incident to one of its leaves $u(\overline{T_i})$.

Observe that, if \tlinstance{} has $n_{\mathcal{T}}$ nodes in the trees of $\mathcal{T}$ (where $|V| < n_{\mathcal{T}}$), then \sefeinstance{*} contains at most $3n_{\mathcal{T}}$ vertices. Also, the number of edges of \sefeinstance{*} is at most $|E|+2n_{\mathcal{T}}$. Hence, the size of \sefeinstance{*} is linear in the size of \tlinstance{} and it is easy to see that \sefeinstance{*} can be constructed in linear time.

We prove that \sefeinstance{*} admits a SEFE if and only if \tlinstance{} is $\mathcal{T}$-level planar.

Suppose that \sefeinstance{*} admits a SEFE $\langle \red{\Gamma^*_1}, \blue{\Gamma^*_2}\rangle$. We show how to construct a drawing $\Gamma$ of \tlinstance{}. For $1 \leq i \leq k$, let $\Theta(\overline{T_i})$ be the order in which the leaves of $\overline{T_i}$ appear in a pre-order traversal of $\overline{T_i}$ in $\langle \red{\Gamma^*_1}, \blue{\Gamma^*_2}\rangle$; then, let the ordering $\mathcal{O}_i$ of the vertices of $V_i$ along $L_i$ be either $\Theta(\overline{T_i})$, if $i$ is odd, or the reverse of $\Theta(\overline{T_i})$, if $i$ is even.

We prove that $\Gamma$ is $\mathcal{T}$-level planar. For each $i = 1, \dots,k$, $\mathcal{O}_i$ is compatible with $T_i \in \mathcal{T}$, since the drawing of $\overline{T_i}$, that belongs to \Gint, is planar in $\langle \red{\Gamma^*_1}, \blue{\Gamma^*_2}\rangle$.
Suppose, for a contradiction, that two edges $(u,v), (w,z) \in E$ exist, with $u,w \in V_i$ and $v,z \in V_{i+1}$, that intersect in $\Gamma$. Hence, either $u$ appears before $w$ in $\mathcal{O}_i$ and $v$ appears after $z$ in $\mathcal{O}_{i+1}$, or vice versa. Since $i$ and $i+1$ have different parity, either $u$ appears before $w$ in $\Theta(\overline{T_i})$ and $v$ appears before $z$ in $\Theta(\overline{T_{i+1}})$, or vice versa.
We claim that, in both cases, this implies a crossing in $\langle \red{\Gamma^*_1}, \blue{\Gamma^*_2}\rangle$ between paths $(q_i,$ $u(Q_i),$ $v(P_{i+1}),$ $p_{i+1})$ and $(q_i,w(Q_i),z(P_{i+1}),p_{i+1})$ in \sefeinstance{*}. Since the edges of these two paths belong all to \Gr{*} or all to \Gb{*}, depending on whether $i$ is even or odd, this yields a contradiction.
We now prove the claim. The pre-order traversal $\Theta(Q_i)$ of $Q_i$ (the pre-order traversal $\Theta(P_{i+1})$ of $P_{i+1}$) in $\langle \red{\Gamma^*_1}, \blue{\Gamma^*_2}\rangle$ restricted to the leaves of $Q_i$ (of $P_{i+1}$) is the reverse of $\Theta(\overline{T_i})$ (of $\Theta(\overline{T_{i+1}})$) restricted to the vertices of $V_i$ (of $V_{i+1}$) corresponding to leaves of $Q_i$ (of $P_{i+1}$). Namely, each leaf $x(Q_i)$ of $Q_i$ ($y(P_{i+1})$ of $P_{i+1}$) is connected to leaf $x(\overline{T_i})$ of $\overline{T_i}$ ($y(\overline{T_{i+1}})$ of $\overline{T_{i+1}}$) in the same graph, either \Gr{*} or \Gb{*}, by construction. Hence, the fact that $u$ appears before (after) $w$ in $\Theta(\overline{T_i})$ and $v$ appears before (after) $z$ in $\Theta(\overline{T_{i+1}})$ implies that $u$ appears after (before) $w$ in $\Theta(Q_i)$ and $v$ appears after (before) $z$ in $\Theta(P_{i+1})$. In both cases, this implies a crossing in $\langle \red{\Gamma^*_1}, \blue{\Gamma^*_2}\rangle$ between the two paths.

Suppose that \tlinstance{} admits a $\mathcal{T}$-level planar drawing $\Gamma$. We show how to construct a SEFE $\langle \red{\Gamma^*_1}, \blue{\Gamma^*_2}\rangle$ of
\sefeinstance{*}.
For $1 \leq i \leq k$, consider the order $\mathcal{O}_i$ of the vertices of level $V_i$ along $L_i$ in $\Gamma$. Since $\Gamma$ is $\mathcal{T}$-level planar, there exists an embedding $\Gamma_i$ of tree $T_i \in \mathcal{T}$ that is compatible with $\mathcal{O}_i$. If $i$ is odd (even), then assign to each internal vertex of $\overline{T_i}$ the same (resp. the opposite) rotation scheme  as its corresponding vertex in $\Gamma_i$. Also, if $i$ is odd, then assign to $p_i$ (to $q_i$) the rotation scheme in \Gr{*} (in \Gb{*}) such that the paths connecting $p_i$ ($q_i$) to the leaves of $\overline{T_i}$ (either with an edge or passing through a leaf of the corresponding star of \Gint) appear in the same clockwise order as the vertices of $V_i$ appear in $\mathcal{O}_i$; if $i$ is even, then assign to $p_i$ (to $q_i$) the rotation scheme in \Gb{*} (in \Gr{*}) such that the paths connecting $p_i$ ($q_i$) to the leaves of $\overline{T_i}$ appear in the same counterclockwise order as the vertices of $V_i$ appear in $\mathcal{O}_i$.
Finally, consider the embedding $\Gamma_{i,i+1}$ obtained by restricting $\Gamma$ to the vertices and edges of the subgraph induced by the vertices of $V_i$ and $V_{i+1}$. If $i$ is odd (even), then assign to the leaves of $Q_i$ and of $P_{i+1}$ in \Gr{*} (in \Gb{*}) the same rotation scheme as their corresponding vertices have in $\Gamma_{i,i+1}$. This completes the construction of $\langle \red{\Gamma^*_1}, \blue{\Gamma^*_2}\rangle$.

We prove that $\langle \red{\Gamma^*_1}, \blue{\Gamma^*_2}\rangle$ is a SEFE of \sefeinstance{*}.
Since the rotation scheme of the internal vertices of each $\overline{T_i}$ are constructed starting from an embedding of $\Gamma_i$ of tree $T_i \in \mathcal{T}$ that is compatible with $\mathcal{O}_i$, the drawing of $\overline{T_i}$ is planar.
Further, since the rotation schemes of $p_i$ (of $q_i$) are also constructed starting from $\mathcal{O}_i$, there exists no crossing between two paths connecting $t_i$ and $p_i$ ($t_i$ and $q_i$), one passing through a leaf $u(\overline{T_i})$ of $\overline{T_i}$ and, possibly, through a leaf $u(P_i)$ of $P_i$ (through a leaf $u(Q_i)$ of $Q_i$), and the other passing through a leaf $v(\overline{T_i})$ of $\overline{T_i}$ and, possibly, through a leaf $v(P_i)$ of $P_i$ (through a leaf $v(Q_i)$ of $Q_i$).
Finally, since the rotation schemes of the leaves of $Q_i$ and $P_{i+1}$ are constructed from the embedding $\Gamma_{i,i+1}$ obtained by restricting $\Gamma$ to the vertices and edges of the subgraph induced by the vertices of $V_i$ and $V_{i+1}$, there exist no two crossing edges between leaves of $Q_i$ and of $P_{i+1}$.
\end{proof}

We remark that a reduction from \tlp to \sefe was described by Schaefer in~\cite{s-ttphtpv-13}; however, the instances of \sefe obtained from that reduction do not satisfy any conditions that make \sefe known to be solvable in polynomial-time.

\begin{theorem}\label{th:tl-polynomial}
  Let \tlinstance{} be a proper instance of \tlp. There exists a
  quadratic-time algorithm that decides whether \tlinstance{} is $\mathcal{T}$-level planar.
\end{theorem}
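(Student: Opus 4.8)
The plan is to combine Lemma~\ref{le:TCOHERENTtoSEFE} with the algorithm of Bl\"asius and R\"utter~\cite{br-spqacep-13}. First I would apply the construction of Lemma~\ref{le:TCOHERENTtoSEFE} to the given proper instance \tlinstance{} of \tlp, obtaining in linear time an equivalent instance \sefeinstance{*} of \sefe such that \Gr{*} and \Gb{*} are $2$-connected and the common graph \Gint is connected. Then I would run the quadratic-time algorithm of~\cite{br-spqacep-13}, which decides exactly whether such an instance of \sefe admits a SEFE, and return its answer.

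Correctness is immediate from the equivalence established in Lemma~\ref{le:TCOHERENTtoSEFE}: \tlinstance{} is $\mathcal{T}$-level planar if and only if \sefeinstance{*} admits a SEFE, and \sefeinstance{*} lies in the class of \sefe instances handled by~\cite{br-spqacep-13}. For the running time, recall from the proof of Lemma~\ref{le:TCOHERENTtoSEFE} that, after the harmless preprocessing that makes every internal node of every tree in $\mathcal{T}$ have at least two children, the instance \sefeinstance{*} has $O(n_{\mathcal{T}})$ vertices and $O(|E|+n_{\mathcal{T}})$ edges, i.e., its size is linear in the size of \tlinstance{}. Hence the reduction costs linear time, and the subsequent call to the algorithm of~\cite{br-spqacep-13} costs time quadratic in the size of \sefeinstance{*}, and therefore quadratic in the size of \tlinstance{}; this term dominates, giving the claimed bound.

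There is no deep obstacle here beyond what is already in place; the one point requiring a little care is the bookkeeping that guarantees that ``quadratic in the size of the \sefe instance'' indeed translates to ``quadratic in the size of the original \tlp instance.'' This is exactly what the size estimates at the end of the proof of Lemma~\ref{le:TCOHERENTtoSEFE} supply, together with the observation that the preprocessing step that contracts internal tree nodes with a single child can itself be performed in linear time and preserves $\mathcal{T}$-level planarity.
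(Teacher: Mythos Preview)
Your proposal is correct and follows essentially the same approach as the paper: apply Lemma~\ref{le:TCOHERENTtoSEFE} to obtain in linear time an equivalent \sefe instance with $2$-connected input graphs and connected common graph, then invoke the quadratic-time algorithm of~\cite{br-spqacep-13}. The paper's proof is in fact terser than yours; your additional remarks on the size bookkeeping and the single-child preprocessing are sound but not needed beyond what Lemma~\ref{le:TCOHERENTtoSEFE} already records.
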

\begin{proof}
By Lemma~\ref{le:TCOHERENTtoSEFE}, an instance \sefeinstance{} of \sefe can be constructed in linear time such that $\red{G_1}$ and $\blue{G_2}$ are $2$-connected, the common graph $G_\cap$ is connected, and \sefeinstance{} is a positive instance of \sefe if and only if \tlinstance{} is $\mathcal{T}$-level planar. The statement follows from the fact that there exists a quadratic-time algorithm~\cite{br-spqacep-13} that decides whether \sefeinstance{} is a positive instance of \sefe.
\end{proof}

\subsection{\clp}\label{sse:clp}

In the following we prove that the polynomial-time algorithm to decide the existence of a $\mathcal{T}$-level planar drawing of a proper instance \tlinstance{} of \tlp can be also employed to decide in polynomial time the existence of a cl-planar drawing of a proper instance \clinstance{} of \clpshort.

A proper cl-graph \clinstance{} is \emph{$\mu$-connected between two levels} $V_i$ and $V_{i+1}$ if there exist two vertices $u \in V_\mu \cap V_i$ and $v \in V_\mu \cap V_{i+1}$ such that edge $(u,v) \in E$.
For a cluster $\mu \in T$, let $\gamma_{\min{}}(\mu) = \min{}\{i | V_i \cap V_\mu \neq \emptyset\}$ and let $\gamma_{\max{}}(\mu) = \max{}\{i | V_i \cap V_\mu \neq \emptyset\}$.
A proper cl-graph \clinstance{} is \emph{level-$\mu$-connected} if it is
$\mu$-connected between levels $V_i$ and $V_{i+1}$ for each $i=\gamma_{\min}(\mu), \dots,\gamma_{\max}(\mu)-1$.
A proper cl-graph \clinstance{} is \emph{level-connected} if it is
$\mu$-level-connected for each cluster $\mu \in T$.

Our strategy consists of first transforming a proper instance of \clpshort into an equivalent level-connected instance, and then transforming such a level-connected instance into an equivalent proper instance of \tlp.

\begin{lemma}\label{le:PROPERtoLEVELCONNECTED}
  Let \clinstance{} be a proper instance of \clp. There exists an equivalent
  level-connected instance \clinstance{*} of \clp.
  Further,  the size of \clinstance{*} is quadratic in the size of \clinstance{} and \clinstance{*} can be constructed in quadratic time.
\end{lemma}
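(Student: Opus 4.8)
The plan is to repair the level‑connectivity of the clusters one at a time, proceeding through the internal nodes of $T$ from the leaves towards the root, and then to show that each repair can be realized inside any given cl‑planar drawing. \emph{Construction.} We process the internal nodes of $T$ in a bottom‑up order (any order in which a cluster comes after all its descendants). When processing a cluster $\mu$, let $\gamma_{\min}(\mu)\le i<\gamma_{\max}(\mu)$ be an index such that the \emph{current} cl‑graph — that is, \clinstance{} together with all the vertices and edges added while processing the descendants of $\mu$ — is not $\mu$‑connected between $V_i$ and $V_{i+1}$; for each such $i$ we add to $V$ one new vertex at level $i$ and one new vertex at level $i+1$, attach both as leaf children of $\mu$ in $T$, and add an edge joining them. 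After all clusters have been processed, every cluster is level‑$\mu$‑connected, so the resulting cl‑graph \clinstance{*} is level‑connected; moreover, since every new edge joins two consecutive levels, \clinstance{*} is again proper. Each cluster receives at most two new vertices per level, $T$ has $O(|V|)$ internal nodes, and there are at most $|V|$ levels, so \clinstance{*} has $O(|V|^2)$ vertices and edges and is built in $O(|V|^2)$ time.

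\emph{Easy direction.} If \clinstance{*} is cl‑planar, deleting from a cl‑planar drawing all added vertices and edges (and the corresponding leaves from $T$) yields a cl‑planar drawing of \clinstance{}: the underlying drawing stays level planar, the vertices of each cluster on each line stay consecutive, and each cluster region can be shrunk so as to again enclose exactly its own vertices while still being crossed at most once by every edge.

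\emph{Hard direction.} Conversely, let $\Gamma$ be a cl‑planar drawing of \clinstance{}. We insert the new vertices and edges in the same bottom‑up order, keeping the drawing cl‑planar throughout. Consider the insertion of a repair for a cluster $\mu$ and an index $i$, consisting of a vertex $x$ at level $i$, a vertex $y$ at level $i+1$, and the edge $(x,y)$. We place $x$ at the left end of the segment $R(\mu)\cap L_i$ and $y$ at the left end of $R(\mu)\cap L_{i+1}$, so that $x$ (resp. $y$) is consecutive with the vertices of $\mu$ on its line, lies inside the region of every ancestor of $\mu$, and lies outside the region of every descendant of $\mu$; we draw $(x,y)$ as a $y$‑monotone curve inside $R(\mu)$, between $L_i$ and $L_{i+1}$, crossing nothing; finally we slightly enlarge $R(\mu)$ and the regions of all ancestors of $\mu$ to contain $x$, $y$ and the new edge, which changes no rotation scheme and preserves all four defining conditions of cl‑planarity.

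\emph{The main obstacle} is to prove that $(x,y)$ can be routed, and this is exactly where properness is used. Since the current cl‑graph is not $\mu$‑connected between $V_i$ and $V_{i+1}$ and every edge spans two consecutive levels, the part of $R(\mu)$ between $L_i$ and $L_{i+1}$ contains no edge with both endpoints in $V_\mu$; hence the only drawing features inside it are (i) the regions of the children of $\mu$, which lie in the interior of $R(\mu)$ (disjoint from its boundary), together with whatever is drawn inside them, and (ii) sub‑curves of edges leaving $V_\mu$, each of which is $y$‑monotone and connects a vertex of $V_\mu$ on $L_i$ or $L_{i+1}$ to a single point of the boundary of $R(\mu)$. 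No such feature, and — again because edges are proper, so two edges sharing a vertex of $V_\mu$ would yield a $V_\mu$‑internal edge between $V_i$ and $V_{i+1}$ — no chain of such features connects the left side of the boundary of $R(\mu)$ to its right side within this part of the region. Consequently a $y$‑monotone curve hugging the left boundary of $R(\mu)$ — detouring inward only to step over the finitely many $y$‑monotone stubs that reach that boundary, thereby automatically staying to the left of all children's regions — can always be drawn from $L_i$ to $L_{i+1}$. Turning this topological picture into a rigorous argument, essentially a sweep over the levels between $i$ and $i+1$, is the bulk of the proof.
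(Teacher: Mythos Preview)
Your overall strategy — bottom-up repair of clusters by inserting a fresh two-vertex edge into each ``gap'' — coincides with the second step of the paper's construction, but you omit the paper's first step, and that omission is fatal. The paper first \emph{subdivides every edge twice}, tripling the number of levels, and puts each subdivision vertex into its endpoint's parent cluster. The purpose of this step is precisely to rule out the configuration your routing argument cannot handle.

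The flaw is in the sentence ``two edges sharing a vertex of $V_\mu$ would yield a $V_\mu$-internal edge between $V_i$ and $V_{i+1}$''. This is false: a vertex $a\in V_i\cap V_\mu$ may have two edges $(a,b_1),(a,b_2)$ with $b_1,b_2\in V_{i+1}\setminus V_\mu$, neither of which is $V_\mu$-internal. Concretely, take $V_\mu=\{a,c\}$ with $\gamma(a)=i$, $\gamma(c)=i+1$, no edge $(a,c)$, and $b_1,b_2\notin V_\mu$ on level $i+1$ lying on opposite sides of $c$. In any cl-planar drawing the two stubs of $(a,b_1)$ and $(a,b_2)$ inside $R(\mu)$ meet at $a$ and exit through opposite sides of $\partial R(\mu)$; together with the segment of $L_i$ they bound, they separate every point of $(R(\mu)\cap L_i)\setminus\{a\}$ from $R(\mu)\cap L_{i+1}$. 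Hence no new vertex $x\neq a$ on $L_i$ inside $R(\mu)$ can be joined by a $y$-monotone curve inside $R(\mu)$ to any point on $L_{i+1}$, so your insertion cannot be carried out. Your ``hugging the left boundary'' curve is trapped in the triangle below the left stub.

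After the paper's subdivision, this obstruction disappears: if some $a\in V'_j\cap V'_\mu$ has two neighbours on level $j+1$, those neighbours are subdivision vertices placed in the parent cluster of $a$, hence in $V'_\mu$, so $\mu$ is already $\mu$-connected between $V'_j$ and $V'_{j+1}$ and no repair is needed there. Thus whenever a repair \emph{is} needed, every vertex of $V'_\mu$ on the two levels has at most one stub, and your left-boundary routing (now the paper's Claim~2 argument) goes through. You should reinstate the subdivision step; without it the lemma's ``hard direction'' does not follow.
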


\begin{proof}
The construction of \clinstance{*} works in two steps. See Fig.~\ref{fig:level-connected}.

First, we transform \clinstance{} into an equivalent instance \clinstance{\prime}. Initialize $V'=V$, $E'=E$, and $T'=T$. Also, for each $i = 1, \dots, k$ and for each vertex $u \in V_i$, set $\gamma'(u) = 3(i-1)+1$.
Then, for each $i = 1, \dots, k-1$, consider each edge $(u,v) \in E$ such that $\gamma(u)=i$ and $\gamma(v)=i+1$. Add two vertices $d_u$ and $d_v$ to $V'$, and replace $(u,v)$ in $E'$ with three edges $(u,d_u)$, $(d_u,d_v)$, and $(d_v,v)$. Set $\gamma'(d_u) = 3(i-1)+2$ and $\gamma'(d_v) = 3i$. Finally, add $d_u$ ($d_v$) to $T'$ as a child of the parent of $u$ (of $v$) in $T'$.

Second, we transform \clinstance{\prime} into an equivalent level-connected instance \clinstance{*}. Initialize $V^*=V'$, $E^*=E'$, $\gamma^*=\gamma'$, and $T^*=T'$. Consider each cluster $\mu \in T'$ according to a bottom-up visit of $T'$. If there exists a level $V'_i$, with $\gamma'_{\min{}}(\mu) \leq i < \gamma'_{\max{}}(\mu)$, such that no edge in $E'$ connects a vertex $u \in V'_i \cap V'_\mu$ with a vertex $v \in V'_{i+1} \cap V'_\mu$, then add two vertices $u^*$ and $v^*$ to $V^*$, add an edge $(u^*,v^*)$ to $E^*$, set $\gamma^*(u^*) = i$ and $\gamma^*(v^*) = i+1$, and add $u^*$ and $v^*$ to $T^*$ as children of $\mu$.

\begin{figure}[tb]
  \centering
  \subfigure[]{\includegraphics[width=.15\textwidth,page=1]{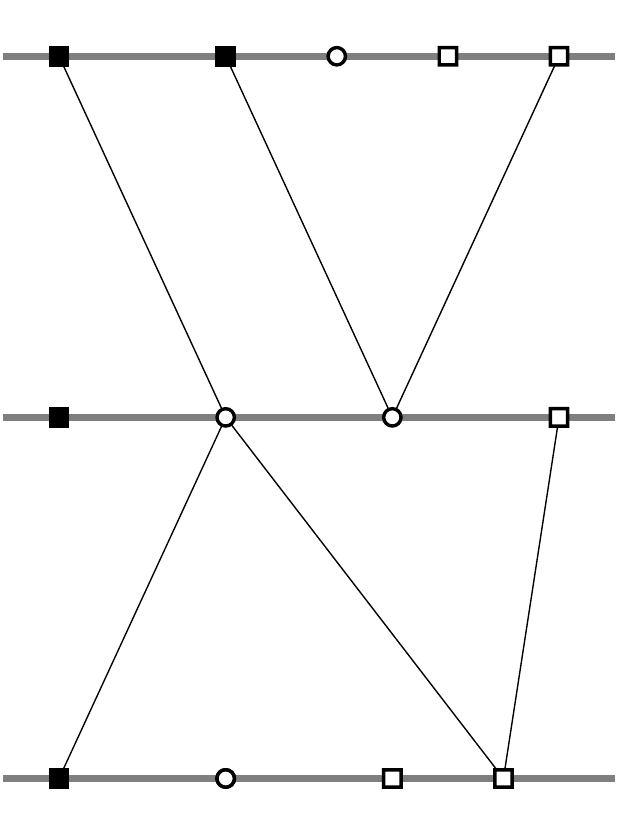}}
  \hspace{1cm}
  \subfigure[]{\includegraphics[width=.15\textwidth,page=1]{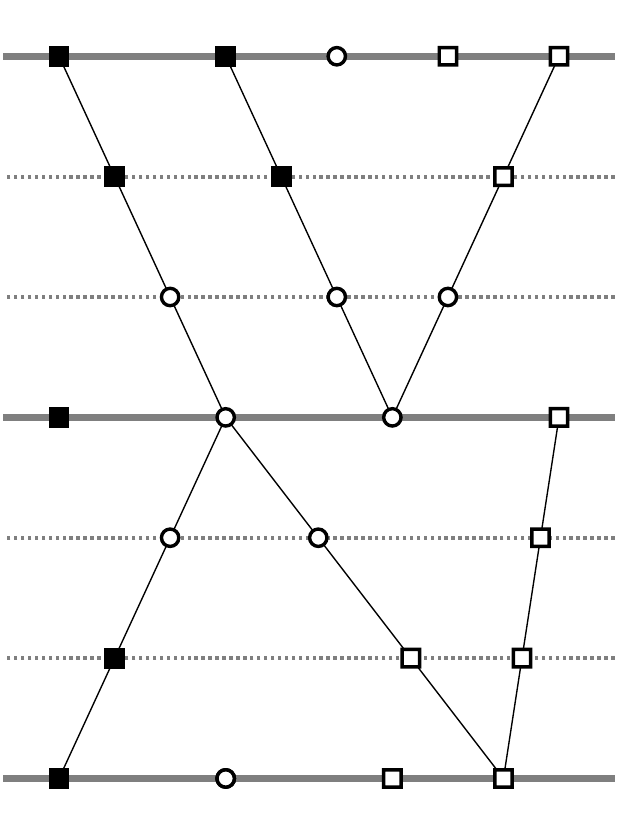}}
  \hspace{1cm}
  \subfigure[]{\includegraphics[width=.15\textwidth,page=1]{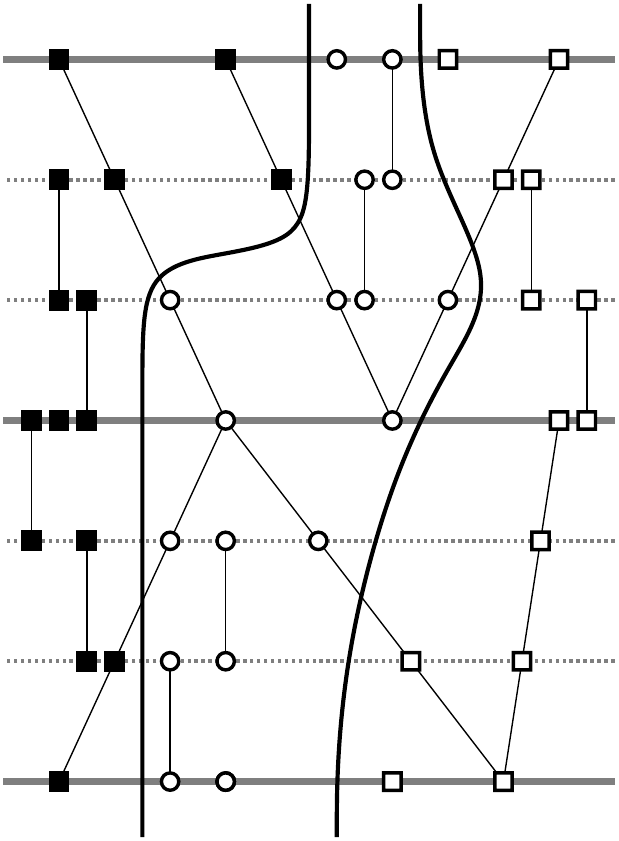}}
\caption{Illustration for the proof of Lemma~\ref{le:PROPERtoLEVELCONNECTED}. (a) Instance \clinstance{} with flat hierarchy containing clusters $\mu_\blacksquare$, $\mu_\square$, and $\mu_\circ$. (b) Insertion of dummy vertices in \clinstance{} to obtain \clinstance{\prime}. (c) Level-connected instance \clinstance{*} obtained from \clinstance{\prime}.}\label{fig:level-connected}
\end{figure}

Observe that, for each cluster $\mu \in T$ and for each level $1 \leq i \leq 3k-2$, at most two dummy vertices are added to \clinstance{*}. This implies that $|V^*| \in O(|V|^2)$. Also, the whole construction can be performed in $O(|V|^2)$ time.

\begin{claimx}\label{cl:one}
\clinstance{\prime} is equivalent to \clinstance{}.
\end{claimx}
\begin{proof}
Suppose that \clinstance{} admits a cl-planar drawing $\Gamma$; we show how to construct a cl-planar drawing $\Gamma'$ of \clinstance{\prime}. 
Initialize $\Gamma'=\Gamma$. We scale $\Gamma'$ up by a factor of $3$ and we vertically translate it so that the vertices in $V'_1$ lie on line 
$y=1$. After the two affine transformations have been applied (i) no crossing has been introduced in the drawing, (ii) every edge is still drawn 
as a $y$-monotone curve, (iii) for  $i=1,\dots,k$, the vertices of level $V_i=V'_{3(i-1)+1}$ are placed on line $y=3(i-1)+1$, that we denote by 
$L'_{3(i-1)+1}$, and (iv) the order in which vertices of $V_i=V'_{3(i-1)+1}$ appear along $L'_{3(i-1)+1}$ is the same as the order in which they 
appeared along $L_i$. For each $i = 1, \dots, k-1$, consider each edge $(u,v) \in E$ such that $\gamma(u)=i$ and $\gamma(v)=i+1$. Place vertices 
$d_u$ and $d_v$ in $\Gamma'$ on the two points of the curve representing $(u,v)$ having $y$-coordinate $3(i-1)+2$ and $3i$, respectively. Then, 
the curves representing in $\Gamma'$ any two edges in $E'$ are part of the curves representing in $\Gamma'$ any two edges in $E$. Hence $\Gamma'$ 
is a cl-planar drawing of \clinstance{\prime}.

Suppose that \clinstance{\prime} admits a cl-planar drawing $\Gamma'$; we show how to construct a cl-planar drawing $\Gamma$ of \clinstance{}. 
Initialize $\Gamma=\Gamma'$. For each $i=1,\dots,k-1$, consider each path $(u,d_u,d_v,v)$ such that $\gamma'(u)=3(i-1)+1$, 
$\gamma'(d_u)=3(i-1)+2$, $\gamma'(d_v)=3i$, and $\gamma'(v)=3i+1$; remove vertices $d_u$ and $d_v$, and their incident edges in $E'$ from 
$\Gamma$; draw edge $(u,v) \in E$ as a curve obtained as a composition of the curves representing edges $(u,d_u)$, $(d_u,d_v)$, and $(d_v,v)$ in 
$\Gamma'$. Scale $\Gamma$ down by a factor of $3$ and vertically translate it so that the vertices of $V_1$ lie on line $y=1$. After the two 
affine transformations have been applied (i) no crossing has been introduced in the drawing, (ii) every edge is still drawn as a $y$-monotone 
curve, (iii) for $i=1,\dots,k$, the vertices of level $V_i$ are placed on line $y=i$, and (iv) the order in which vertices of $V_i=V'_{3(i-1)+1}$ 
appear along $L_i$ is the same as the order in which they appeared along $L_{3(i-1)+1}'$. Since $\Gamma'$ is cl-planar, this implies that 
$\Gamma$ is cl-planar, as well.
\end{proof}

Instance \clinstance{\prime} is such that, if there exists a vertex $u \in V'_{j}$, with $1 \leq j \leq 3(k-1)+1$, that is adjacent to two vertices $v, w \in V'_{h}$, with $h = j \pm 1$, then $u$, $v$, and $w$ have the same parent node $\mu \in T'$; hence, \clinstance{\prime} is $\mu$-connected between levels $V'_j$ and $V'_{h}$.

\begin{claimx}\label{cl:two}
\clinstance{*} is equivalent to \clinstance{\prime}.
\end{claimx}
\begin{proof}
Suppose that \clinstance{*} admits a cl-planar drawing $\Gamma^*$; we show how to construct a cl-planar drawing $\Gamma'$ of \clinstance{\prime}. Initialize $\Gamma'=\Gamma^*$ and remove from $V'$, $E'$, and $\Gamma'$ all the vertices and edges added when constructing $\Gamma^*$. Since all the other vertices of $V'$ and edges of $E'$ have the same representation in $\Gamma'$ and in $\Gamma^*$, and since $\Gamma^*$ is cl-planar, drawing $\Gamma'$ is cl-planar, as well.

Suppose that \clinstance{\prime} admits a cl-planar drawing $\Gamma'$; we show how to construct a cl-planar drawing $\Gamma^*$ of \clinstance{*}. Initialize $\Gamma^*=\Gamma'$. Consider a level $V'_i$, with $1 \leq i \leq 3(k-1)$, such that vertices $u^*, v^* \in \mu$ with $\gamma'(u^*) = i$ and $\gamma'(v^*) = i+1$, for some cluster $\mu \in T$, have been added to \clinstance{*}. By construction, \clinstance{\prime} is not $\mu$-connected between levels $V'_i$ and $V'_{i+1}$. As observed before, this implies that no vertex $u \in V'_{i} \cap V'_\mu$ exists that is connected to two vertices $v, w \in V'_{i+1}$, and no vertex $u \in V'_{i+1} \cap V'_\mu$ exists that is connected to two vertices $v, w \in V'_{i}$. Hence, vertices $u^*$ and $v^*$, and edge $(u^*,v^*)$, can be drawn in $\Gamma^*$ entirely inside the region representing $\mu$ in such a way that $u^*$ and $v^*$ lie along lines $L'_i$ and $L'_{i+1}$ and there exists no crossing between edge $(u^*,v^*)$ and another edge.
\end{proof}

This concludes the proof of the lemma.
\end{proof}

\begin{lemma}\label{le:CLPtoTL}
  Let \clinstance{} be a level-connected instance of \clp. There exists an equivalent proper instance \tlinstance{} of \tlp.  Further, the size of \tlinstance{} is linear in the size of \clinstance{} and \tlinstance{} can be constructed in quadratic time.
\end{lemma}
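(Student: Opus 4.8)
\smallskip\noindent\textbf{Proof plan.}
The idea is to leave the level graph untouched and push the whole cluster hierarchy into the level trees. I would let the $\mathcal{T}$-level instance \tlinstance{} have the same $V$, $E$ and $\gamma$ as \clinstance{} (which is already a proper level graph, since \clinstance{} is a proper cl-graph) and, for each level $i$, let $T_i\in\mathcal{T}$ be the tree obtained from $T$ by taking the minimal subtree of $T$ spanning the leaves that lie in $V_i$ and then suppressing every internal node having a single child. Then $T_i$ has leaf set exactly $V_i$, and each of its internal nodes corresponds to a cluster $\mu\in T$ with $V_\mu\cap V_i\neq\emptyset$; since all internal nodes have at least two children, $|T_i|\le 2|V_i|-1$, so $\sum_i|T_i|=O(|V|)$ and \tlinstance{} has size linear in that of \clinstance{}. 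All the trees $T_i$ can be produced in $O(|V|^2)$ time, e.g.\ by walking from each vertex up to the root of $T$.

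The direction ``\clinstance{} cl-planar $\Rightarrow$ \tlinstance{} $\mathcal{T}$-level planar'' is immediate: a cl-planar drawing is in particular a level planar drawing, and condition~(4) of cl-planarity states that on each line $L_i$ the vertices of every cluster $\mu$ form a contiguous block; as the nodes of $T_i$ correspond to such clusters, the order of $V_i$ along $L_i$ is compatible with $T_i$, so the same drawing is $\mathcal{T}$-level planar.

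The work is in the converse. Let $\Gamma$ be a $\mathcal{T}$-level planar drawing; the only property of $\Gamma$ I would use is $(\star)$: for every cluster $\mu$ and every level $i$, the vertices of $V_\mu\cap V_i$ appear consecutively along $L_i$ (this is exactly $T_i$-compatibility). I would obtain a cl-planar drawing from $\Gamma$ by inflating, for each cluster $\mu$, the subgraph induced by $V_\mu$ into a region: on each level $i\in\{\gamma_{\min}(\mu),\dots,\gamma_{\max}(\mu)\}$, add just off $L_i$ a short arc $s_i(\mu)$ threading through the vertices of $V_\mu\cap V_i$ in their order along $L_i$ (by $(\star)$ and the fact that the level graph is proper, this arc meets no edge except at those vertices); let $D(\mu)$ be the union of all vertices and edges with both endpoints in $V_\mu$, the arcs $s_i(\mu)$, and all bounded faces of the arrangement of these curves; and let the region $R(\mu)$ representing $\mu$ be an $\varepsilon_\mu$-neighborhood of $D(\mu)$, for a suitable cascade of small parameters with $\varepsilon_\nu<\varepsilon_\mu$ whenever $\nu$ is a descendant of $\mu$. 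Re-routing the edges of $\mu$ inside $R(\mu)$, I would verify cl-planarity through four points. (i) $R(\mu)$ is a simple region: $D(\mu)$ is connected because the arcs $s_i(\mu)$ link the $\mu$-vertices on each level and, crucially, level-connectedness provides an edge of $\mu$ between every two consecutive levels it spans; and $D(\mu)$ is simply connected since all its bounded faces have been filled in. (ii) $R(\mu)$ contains exactly $V_\mu$: it contains $V_\mu$; conversely, any vertex strictly inside a bounded face of the $\mu$-arrangement has, on its level, a vertex of $V_\mu$ on each side of it, so $(\star)$ forces it into $V_\mu$, and for small $\varepsilon_\mu$ the neighborhood adds only points arbitrarily close to $\mu$-vertices, $\mu$-edges and arcs, hence no new vertex. (iii) every edge crosses $\partial R(\mu)$ at most once: an edge with both endpoints in $V_\mu$ is drawn inside $R(\mu)$; an edge with no endpoint in $V_\mu$ cannot straddle the $\mu$-block on any level (else $(\star)$ and crossing-freeness of $\Gamma$ would force it to cross a $\mu$-edge witnessing level-connectedness), hence it is disjoint from $D(\mu)$ and misses $R(\mu)$ for small $\varepsilon_\mu$; an edge with exactly one endpoint $u\in V_\mu$ starts inside $R(\mu)$ near $u$ and, being $y$-monotone and crossing no edge of $\Gamma$, exits $R(\mu)$ exactly once. (iv) The regions are laminar: for $\nu$ a descendant of $\mu$ one gets $R(\nu)\subseteq R(\mu)$ (every curve in the definition of $D(\nu)$ lies within $R(\mu)$ by the choice of parameters, and $\varepsilon_\nu<\varepsilon_\mu$), while for incomparable $\mu,\nu$ the defining curve sets are disjoint compact sets, hence at positive distance, so $R(\mu)\cap R(\nu)=\emptyset$ for small parameters; in particular no two cluster boundaries cross.

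I expect the converse direction to be the main obstacle, and within it the claims that $(\star)$ together with level-connectedness and crossing-freeness of $\Gamma$ are exactly what guarantees that (a) no ``foreign'' vertex is trapped inside the inflated $\mu$-subgraph and (b) no edge is forced to cross $\partial R(\mu)$ twice. The role of level-connectedness is precisely to make $D(\mu)$ connected, so that $\mu$ can be drawn as a single simple region; its absence is what the \NPHN reduction of Section~\ref{se:hardness} exploits, which is why Lemma~\ref{le:PROPERtoLEVELCONNECTED} first normalizes the instance to be level-connected. The remaining ingredients --- the definition of the reduction, the easy direction, and the parameter bookkeeping for laminarity --- are routine though somewhat tedious.
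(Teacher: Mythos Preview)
Your construction of \tlinstance{} and the easy direction coincide with the paper's. The difference is in the converse. The paper does not build the cluster regions at all: starting from a $\mathcal{T}$-level planar drawing $\Gamma^*$, it observes that $(\star)$ holds on every level and then argues that, to certify cl-planarity, it \emph{suffices} to rule out any quadruple $u,v\in V_i$, $w,z\in V_j$ with $u,w\in V_\mu$, $v,z\in V_\nu$ where the $\mu$-block and the $\nu$-block swap sides between $L_i$ and $L_j$. Level-connectedness reduces this to $j=i+1$, and then the witnessing $\mu$-edge and $\nu$-edge between the two levels would have to cross in $\Gamma^*$, a contradiction. So the paper's hard direction is a short combinatorial argument that leaves the actual drawing of regions implicit.

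Your route is more geometric: you explicitly manufacture $R(\mu)$ as an $\varepsilon$-neighborhood of the $\mu$-subgraph threaded by per-level arcs, and check the four cl-planarity conditions one by one. This is correct and arguably more self-contained, since the paper's ``it suffices'' hides exactly the work you spell out in (i)--(iv); in particular your use of level-connectedness in (i) (connectedness of $D(\mu)$) and in (iii) (an edge foreign to $\mu$ that straddled the $\mu$-block would cross the witnessing $\mu$-edge) is the same mechanism the paper invokes for its no-swap claim. The price is a longer and more delicate argument: your step (iii) for edges with exactly one endpoint in $V_\mu$ (``exits $R(\mu)$ exactly once'') and the parameter bookkeeping for laminarity need some care, whereas the paper sidesteps all of this by reducing to a single forbidden configuration. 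Either approach works; the paper's is shorter, yours fills in what the paper leaves to the reader.
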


\begin{proof}
We construct \tlinstance{} from \clinstance{} as follows. Initialize $\mathcal{T} = \emptyset$. For $i=1,\dots,k$, add to $\mathcal{T}$ a tree $T_i$ that is the subtree of the cluster hierarchy $T$ whose leaves are all and only the vertices of level $V_i$. Note that the set of leaves of the trees in $\mathcal{T}$ corresponds to the vertex set $V$. Since each internal node of the trees in $\mathcal{T}$ has at least two children, we have that the size of \tlinstance{} is linear in the size of \clinstance{}. Also, the construction of \tlinstance{} can be easily performed in $O(|V|^2)$ time.

We prove that \tlinstance{} is $\mathcal{T}$-level planar if and only if \clinstance{} is cl-planar.

Suppose that \tlinstance{} admits a $\mathcal{T}$-level planar drawing $\Gamma^*$; we show how to construct a cl-planar drawing $\Gamma$ of \clinstance{}. Initialize $\Gamma = \Gamma^*$. Consider each level $V_i$, with $i = 1,\dots,k$. By construction, for each cluster $\mu \in T$ such that there exists a vertex $v \in V_i \cap V_\mu$, there exists an internal node of tree $T_i \in \mathcal{T}$ whose leaves are all and only the vertices of $V_i \cap V_\mu$. Since $\Gamma^*$ is $\mathcal{T}$-level planar, such vertices appear consecutively along $L_i$. Hence, in order to prove that $\Gamma$ is a cl-planar drawing, it suffices to prove that there exist no four vertices $u,v,w,z$ such that (i) $u,v \in V_i$ and $w,z \in V_j$,  with $1 \leq i < j \leq k$; (ii) $u,w \in V_\mu$ and $v,z \in V_\nu$, with $\mu \neq \nu$; and (iii) $u$ appears before $v$ on $L_i$ and $w$ appears after $z$ on $L_j$, or vice versa.
Suppose, for a contradiction, that such four vertices exist. Note that, we can assume $j = i \pm 1$ without loss of generality, as \clinstance{} is level-connected. Assume that $u$ appears before $v$ along $L_i$ and $w$ appears after $z$ along $L_j$, the other case being symmetric. Since $\Gamma^*$ is $\mathcal{T}$-level planar, all the vertices of $V_\mu$ appear before all the vertices of $V_\nu$ along $L_i$ and all the vertices of $V_\mu$ appear after all the vertices of $V_\nu$ along $L_j$. Also, since \clinstance{} is level-connected, there exists at least an edge $(a,b)$ such that $a \in V_i \cap V_\mu$ and $b \in V_j \cap V_\mu$, and an edge $(c,d)$ such that $c \in V_i \cap V_\nu$ and $d \in V_j \cap V_\nu$. However, under the above conditions, these two edges intersect in $\Gamma$ and in $\Gamma^*$, hence contradicting the hypothesis that $\Gamma^*$ is $\mathcal{T}$-level planar.

Suppose that \clinstance{} admits a cl-planar drawing $\Gamma$; we show how to construct a $\mathcal{T}$-level planar drawing $\Gamma^*$ of \tlinstance{}. Initialize $\Gamma^* = \Gamma$. Consider each level $V_i$, with $i = 1,\dots,k$. By construction, for each internal node $w$ of tree $T_i \in \mathcal{T}$, there exists a cluster $\mu \in T$ such that the vertices of $V_i \cap V_\mu$ are all and only the leaves of the subtree of $T_i$ rooted at $w$. Since $\Gamma$ is cl-planar, such vertices appear consecutively along $L_i$. Hence, $\Gamma^*$ is $\mathcal{T}$-level planar.
\end{proof}

We get the following.

\begin{theorem}\label{th:cl-plynomial}
  Let \clinstance{} be a proper instance of \clp. There exists an   $O(|V|^4)$-time algorithm that decides whether \clinstance{} admits a cl-planar drawing.
\end{theorem}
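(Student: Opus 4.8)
The plan is to chain together the three results already established in this section. Given a proper instance \clinstance{} of \clpshort, first apply Lemma~\ref{le:PROPERtoLEVELCONNECTED} to obtain an equivalent level-connected instance \clinstance{*}; this takes $O(|V|^2)$ time and produces an instance whose vertex set has size $O(|V|^2)$. Then apply Lemma~\ref{le:CLPtoTL} to \clinstance{*}: since \clinstance{*} is level-connected, we obtain an equivalent proper instance \tlinstance{} of \tlp whose size is linear in the size of \clinstance{*}, hence still $O(|V|^2)$, and which can be constructed in time quadratic in the size of \clinstance{*}, i.e.\ $O(|V|^4)$. Finally, apply Theorem~\ref{th:tl-polynomial} to decide $\mathcal{T}$-level planarity of \tlinstance{} in time quadratic in its size, i.e.\ $O((|V|^2)^2) = O(|V|^4)$.

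The correctness of the composed algorithm is immediate from the chain of equivalences: \clinstance{} is cl-planar iff \clinstance{*} is cl-planar (Lemma~\ref{le:PROPERtoLEVELCONNECTED}) iff \tlinstance{} is $\mathcal{T}$-level planar (Lemma~\ref{le:CLPtoTL}) iff the algorithm of Theorem~\ref{th:tl-polynomial} accepts. For the running time, I would add up the three contributions: $O(|V|^2)$ for the first reduction, $O(|V|^4)$ for the second (quadratic in the already-quadratic size of \clinstance{*}), and $O(|V|^4)$ for the final \tlp test; the dominating term is $O(|V|^4)$. One should double-check that the intermediate instance \tlinstance{} is indeed \emph{proper} so that Theorem~\ref{th:tl-polynomial} applies — but this is guaranteed by the statement of Lemma~\ref{le:CLPtoTL}, and ultimately stems from the fact that \clinstance{*}, being derived from the proper instance \clinstance{} via the subdivision performed in Lemma~\ref{le:PROPERtoLEVELCONNECTED}, remains proper.

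There is essentially no mathematical obstacle here; the theorem is a bookkeeping corollary. The only point requiring a little care is the size-tracking: the blow-up is multiplicative across the two reductions, so the quadratic-time \tlp algorithm is being run on an instance of size $\Theta(|V|^2)$, which is exactly what yields the $O(|V|^4)$ bound rather than something smaller. I would state this explicitly to preempt any confusion about why the exponent is $4$ and not $2$ or $3$.
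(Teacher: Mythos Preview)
Your proposal is correct and follows exactly the same approach as the paper's own proof: chain Lemma~\ref{le:PROPERtoLEVELCONNECTED}, Lemma~\ref{le:CLPtoTL}, and Theorem~\ref{th:tl-polynomial}, tracking that the intermediate instance has size $O(|V|^2)$ so that the quadratic-time steps on it cost $O(|V|^4)$. Your additional remarks about verifying that the intermediate \tlp instance is proper and about why the exponent is $4$ are accurate and slightly more explicit than the paper's terse version.
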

\begin{proof}
By Lemma~\ref{le:PROPERtoLEVELCONNECTED}, it is possible to construct in $O(|V|^2)$ time a level-connected instance \clinstance{\prime} of \clpshort that is cl-planar if and only if \clinstance{} is cl-planar, with $|V'|=O(|V|^2)$. By Lemma~\ref{le:CLPtoTL}, it is possible to construct in $O(|V'|^2)$ time a proper instance \tlinstance{\prime} of \tlp that is $\mathcal{T}$-level planar if and only if \clinstance{\prime} is cl-planar. Finally, by Theorem~\ref{th:tl-polynomial}, it is possible to test in $O(|V'|^2)$ time whether \tlinstance{\prime} is $\mathcal{T}$-level planar.
\end{proof}

\clearpage
\section{Open Problems}\label{se:conclusions}

Several problems are opened by this research:
\begin{enumerate}
\item The algorithm in~\cite{jl-lpelt-02} for testing level planarity and the algorithm in~\cite{fb-clp-04} for testing \clpshort for level-connected instances in which the level graph is a proper hierarchy both have linear-time complexity. The algorithm in~\cite{wsp-gktlg-12} for testing \tlp for instances in which the number of vertices on each level is bounded by a constant has quadratic-time complexity. Although our polynomial-time algorithms solve more general problems than the ones cited above, they are less efficient. Hence, there is room for future research aiming at improving our complexity bounds.
\item Our \NPCN result on the complexity of \clpshort exploits a cluster hierarchy whose depth is linear in the number of vertices of the underlying graph. Does the \NPHN hold even when the hierarchy is flat or has a depth that is sublinear in the number of vertices?
\item The \NPHN of \clpshort presented in this paper is, to the best of our knowledge, the first hardness result for a variation of the clustered planarity problem in which none of the c-planarity constraints is dropped. Is it possible to use similar techniques to tackle the more intriguing problem of determining the complexity of {\sc Clustered Planarity}?
\end{enumerate}

{\small \bibliography{bibliography}} \bibliographystyle{splncs03}
\end{document}